\newtheorem{definition}{Definition}
\newtheorem{lemma}{Lemma}
\newtheorem{theorem}{Theorem}
\Crefname{definition}{Definition}{Definitions}
\crefname{definition}{Def.}{Defs.}
\Crefname{theorem}{Theorem}{Theorems}
\crefname{theorem}{Theorem}{Theorems}
\Crefname{lemma}{Lemma}{Lemmas}
\crefname{lemma}{Lemma}{Lemmas}
\Crefname{section}{Section}{Sections}
\crefname{section}{\S}{\S}
\crefname{algorithm}{Alg.}{Algs.}
\Crefname{algorithm}{Algorithm}{Algorithms}
\crefname{figure}{Fig.}{Figs.}
\Crefname{figure}{Figure}{Figures}
\crefname{table}{Tab.}{Tabs.}
\Crefname{table}{Table}{Tables}
\newcommand{\Real}{\mathbb{R}}
\newcommand{\V}[2][]{{\bm{#1\mathbf{\MakeLowercase{#2}}}}} 
\newcommand{\M}[2][]{{\bm{#1\mathbf{\MakeUppercase{#2}}}}} 
\newcommand{\T}[2][]{\boldsymbol{#1\mathscr{\MakeUppercase{#2}}}}
\newcommand{\sttsv}{STTSV\xspace}
\newcommand{\arraymul}{ternary multiplication}
\newcommand{\pluseq}{\mathrel{+}=}
\begin{document}

\title{Minimizing Communication for Parallel Symmetric Tensor Times Same Vector Computation}

\author{Hussam Al~Daas}
\orcid{0000-0001-9355-4042}
\affiliation{%
  \institution{STFC, Rutherford Appleton Laboratory}
  \city{Didcot}
  \state{Oxfordshire}
  \country{UK}
}
\email{hussam.al-daas@stfc.ac.uk}

\author{Grey Ballard}
\orcid{0000-0003-1557-8027}
\affiliation{%
  \institution{Wake Forest University}
  \city{Winston-Salem}
  \state{NC}
  \country{USA}}
\email{ballard@wfu.edu}

\author{Laura Grigori}
\orcid{0000-0002-5880-1076}
\affiliation{%
  \institution{EPFL and PSI}
  \city{Lausanne}
  \country{Switzerland}}
\email{laura.grigori@epfl.ch}

\author{Suraj Kumar}
\orcid{0009-0001-1449-0165}
\affiliation{%
  \institution{Inria Lyon}
  \city{Lyon}
  \country{France}
}
\email{suraj.kumar@inria.fr}

\author{Kathryn Rouse}
\orcid{0009-0001-4045-0423}
\affiliation{%
 \institution{Inmar Intelligence}
 \city{Winston-Salem}
 \state{NC}
 \country{USA}
}
\email{kathryn.rouse@inmar.com}

\author{Mathieu V\'{e}rit\'{e}}
\orcid{0000-0003-0581-2772}
\affiliation{%
 \institution{EPFL}
 \city{Lausanne}
 \country{Switzerland}
}
\email{mathieu.verite@epfl.ch}

\begin{abstract}
In this article, we focus on the parallel communication cost of multiplying the same vector along two modes of a $3$-dimensional symmetric tensor.
This is a key computation in the higher-order power method for determining eigenpairs of a $3$-dimensional symmetric tensor and in gradient-based methods for computing a symmetric CP decomposition.
We establish communication lower bounds that determine how much data movement is required to perform the specified computation in parallel.
The core idea of the proof relies on extending a key geometric inequality for $3$-dimensional symmetric computations.
We demonstrate that the communication lower bounds are tight by presenting an optimal algorithm where the data distribution is a natural extension of the triangle block partition scheme for symmetric matrices to 3-dimensional symmetric tensors.
\end{abstract}

\maketitle

\noindent\textbf{Keywords:} {Symmetric Tensor Computations, Communication Costs, Parallel Algorithms}

\section{Introduction}
\label{sec:into}
Exploiting symmetry in matrix and tensor computations can save computation, memory, and communication, or movement of data.
As a symmetric matrix has approximately half the unique parameters of a non-symmetric matrix, many operations involving symmetric matrices require half the computation of their non-symmetric counterparts.
Because matrix computations are so fundamental, standard libraries such as implementations of the Basic Linear Algebra Subroutines (BLAS) commit to taking this constant factor advantage whenever possible.
More recent results have demonstrated that the communication costs for symmetric matrix computations can also be reduced with more complicated algorithms \cite{BELV22,ABGKR23,ABGKRV25,ABC+23}.

In the case of tensors, or multidimensional arrays with 3 or more modes, symmetry offers even more opportunity for cost savings.
For example, while a non-symmetric $d$-dimensional tensor of dimensions $n\times n \times \cdots \times n$ has $n^d$ entries, a symmetric tensor with the same dimensions has approximately $n^d/d!$ entries.
Here, we consider a symmetric tensor's entry to have the same value no matter how its indices are permuted, which is sometimes referred to as fully symmetric or supersymmetric.

Two fundamental problems involving a symmetric tensor are tensor eigenvalues, and Canonical Polyadic (CP) decomposition.
A $\mathbb{Z}$-eigenvalue of a 3-dimensional tensor $\T{A}$ is defined as a scalar $\lambda$ that satisfies the equation $\T{A} \times_2 \V{x} \times_3 \V{x} = \lambda \V{x}$ for some unit vector $\V{x}$ \cite{Lim2005,Qi2005}, where $\times_j$ denotes a tensor-times-vector operation in mode $j$.
A symmetric CP decomposition of a 3-dimensional tensor $\T{A}$ is specified by a matrix $\M{X}$ with $\ell$th column denoted $\V{x}_\ell$ such that $\T{A} \approx \sum_{\ell=1}^r \V{x}_\ell \circ \V{x}_{\ell} \circ \V{x}_{\ell}$, where $\circ$ denotes an outer product \cite{Kolda15}.
\Cref{alg:hopm} specifies a method for computing $\mathbb{Z}$-eigenvalues, and \Cref{alg:scp-grad} specifies the computation of the gradient of the objective function defined by the least squares error of the CP decomposition.
In both cases, the bottleneck computation is given by  $\T{A} \times_2 \V{x} \times_3 \V{x}$, which we refer to as the \emph{Symmetric-Tensor-Times-Same-Vector} (\sttsv) computation.
Algorithms for computing other types of eigenvalues and eigenvectors, including $\mathbb{H}$-eigenvalues, also rely on \sttsv{} \cite{SA+23}.
 
\begin{algorithm}
    \caption{\label{alg:hopm}Higher-Order Power Method}
    \begin{algorithmic}[1]
        \Require $\T{A}\in \Real^{n\times n\times n}$ is a symmetric 3-dimensional tensor. 
        \Ensure $\V{x}\in \Real^n$ and $\lambda \in \Real $ such that $\T{A} \times_2 \V{x} \times_3 \V{x} = \lambda \V{x}$.
        \State Initialize $\V{x}$ to a random unit vector
        \Repeat
        \State $\V{y} = \T{A} \times_2 \V{x} \times_3 \V{x} $\label{alg:hopm:bottleneck} \Comment{\sttsv}
        \State $\V{x} = \frac{\V{y}}{||\V{y}||}$
        \Until{convergence}
        \State $\lambda = \T{A} \times_1 \V{x} \times_2 \V{x} \times_3 \V{x}$
    \end{algorithmic}
\end{algorithm}

\begin{algorithm}
    \caption{\label{alg:scp-grad}Symmetric CP Gradient}
    \begin{algorithmic}[1]
        \Require $\T{A}\in \Real^{n\times n\times n}$ is a symmetric 3-dimensional tensor, $\M{X}\in \Real^{n\times r}$ is approximate CP factor
        \Ensure $\M{Y}\in \Real^{n\times r}$ is the gradient with respect to the function $f(\M{X})=\frac 16 \|\T{A} - \sum_\ell \V{x}_{\ell} \circ \V{x}_{\ell} \circ \V{x}_{\ell}\|^2$
        \State $\M{G} = (\M{X}^T\M{X}) \ast (\M{X}^T\M{X})$ \Comment{$\ast$ denotes the elementwise product}
        \For{$\ell = 1$ to $r$}
        \State $\V{y}_\ell = \T{A} \times_2 \V{x}_{\ell} \times_3 \V{x}_{\ell} $\label{alg:scp-grad:bottleneck} \Comment{\sttsv}
        \EndFor
        \State $\M{Y} = \M{X}\M{G}-\M{Y}$
    \end{algorithmic}
\end{algorithm}

In this work, we focus on \sttsv and its communication costs.
The main contributions of this paper are to
\begin{enumerate}
	\item derive a geometric inequality that bounds the maximum reuse of vector elements in a $3$-dimensional symmetric computation;
	\item establish communication lower bounds for the parallel load-balanced \sttsv computation;
	\item extend triangle block partitioning proposed by Beaumont et al.~\cite{BELV22} and Al Daas et al.~\cite{ABGKRV25} for symmetric matrices to tetrahedral block partitioning for 3-dimensional symmetric tensors; 
	\item present and analyze an algorithm whose communication cost exactly matches the leading term of the lower bound.
\end{enumerate}

To establish communication lower bounds, we employ a geometric inequality to determine the minimal amount of data that a processor requires to perform its share of computation and consider how much data the processor must send or receive.
In order to demonstrate that the communication lower bounds are tight, we provide an algorithm where the leading term of the communication cost exactly matches that of the lower bound. 
We extend the triangle block partition scheme proposed by Beaumont et al.~\cite{BELV22} and Al Daas et al.~\cite{ABGKRV25} for the lower triangle of symmetric matrices to a tetrahedral partition of the lower tetrahedron of 3-dimensional symmetric tensors. Our algorithm uses tetrahedral blocks to exploit the symmetry. To generate tetrahedral block partitions, we rely on Steiner systems that are collections of equally sized subsets of indices where three distinct indices appear in exactly one subset.
We use an owner-compute rule, assigning all computations involving elements in a block of the tensor to the processor that owns that block, ensuring that no tensor data needs to be communicated and only the input and output vectors need to be exchanged.

The rest of the paper is organized as follows. \Cref{sec:relatedwork} reviews previous work on communication lower bounds and optimal algorithms for some linear algebra computations. In \Cref{sec:notations}, we introduce our notations and preliminaries for the \sttsv computation. We present some key new results in \Cref{sec:fundamental}, which we use to establish communication lower bounds in \Cref{sec:bounds}. We explain tetrahedral block partitioning in \Cref{sec:tbc} and present a communication optimal algorithm for \sttsv in \Cref{sec:algo:3d}. Finally, we discuss conclusions and future perspectives in \Cref{sec:conclusion}.

\section{Related Work}
\label{sec:relatedwork}
There are numerous studies on establishing communication lower bounds. 
The first communication lower bounds were proposed by Hong and Kung~\cite{HK81} for various computations, including matrix multiplication, for a sequential machine using the red-blue pebble game. 
Aggarwal et al.~\cite{ACS90} extended their work for the parallel model and derived first memory-independent communication lower bounds for matrix multiplication. 
Irony et al.~\cite{IRONY:JPDC04} reproduced the parallel matrix multiplication bounds with the Loomis-Whitney geometric inequality~\cite{LW49}. 
Ballard et al.~\cite{Ballard:3NL} extended the use of Loomis-Whitney inequality to derive sequential and parallel bounds for all direct methods of linear algebra which can be expressed as three nested loops. 
Olivry et al.~\cite{OLPSR20} and Kwasniewski et al.~\cite{KB+21} analyzed computational graphs to automatically derive sequential lower bounds for several linear algebra computations.

Communication lower bounds have also been derived for computations involving tensors. 
In particular, Ballard et al.~\cite{BKR18,BR20,ABGKR24} gave sequential and parallel lower bounds for the Matricized-Tensor Times Khatri-Rao product (MTTKRP) and Multiple Tensor Times Matrix (Multi-TTM) computations using H\"{o}lder-Brascamp-Lieb inequalities \cite{BCCT10}.
Ziogas et al.~\cite{ZKBSH22} improved the leading constant for the sequential and memory dependent parallel lower bounds for MTTKRP using CDAG analysis.
Solomnik et al.~\cite{Solomonik:SISC:2021} gave communication lower bounds for symmetric tensor contractions.

While many of the automatically derived bounds for linear algebra computations were tight, the leading order constants were not tight for computations involving a symmetric matrix.
Beaumont et al.~\cite{BELV22} improved the constant for the sequential communication lower bound for SYRK using CDAG analysis by taking advantage of symmetry. 
They demonstrated the bound is tight by providing an algorithm that matches the lower bound. 
Their algorithm introduced triangle block partition of the symmetric matrix to allow maximum reuse of the elements of the non-symmetric matrix. 
Al Daas et al.~\cite{ABGKR23} derived a symmetric version of the Loomis-Whitney inequality which they used to obtain memory-independent parallel communication lower bounds for SYRK. 
They adapted the triangle block partition given by Beaumont et al.~\cite{BELV22} for the parallel case and provided communication optimal algorithms for SYRK.
Agullo et al.~\cite{ABC+23} were able to improve the performance of distributed SYMM implementations by applying the triangle block partition to improve the arithmetic intensity over standard implementations.
Al Daas et al.~\cite{ABGKRV25} established sequential and parallel communication lower bounds for three nested loop computations involving a symmetric matrix.
They gave algorithms for SYRK, SYR2K and SYMM all of which achieve the communication lower bounds, using new families of triangle block partitions derived from finite geometries. 
We follow their approach and use finite geometries to obtain partitions for 3-dimensional symmetric tensors.

\section{Notations and Preliminaries}
\label{sec:notations}

We will use the following notation throughout the paper.
$\T{A}$ represents a $3$-dimensional symmetric tensor of dimensions $n\times n \times n$, while $\V{x}$ and $\V{y}$ are vectors of length $n$. 
As $\T{A}$ is a symmetric tensor, $a_{ijk}$ is same for all permutations of $i$, $j$ and $k$, i.e, $a_{ijk} = a_{ikj} = a_{jik} = a_{jki} = a_{kij} = a_{kji}$.   

The tensor-times-vector operation in mode $j$ of $\T{A}$ and $\V{x}$ is denoted as $\T{A}\times_j \V{x}$ and it can be computed element-wise as $(\T{A}\times_j \V{x})_{ik} = \sum_j a_{ijk}\cdot x_j$. 
Similarly, $\T{A}\times_j \V{x} \times_k \V{x}$ can be computed element-wise as $(\T{A}\times_j \V{x} \times_k \V{x})_{i} = \sum_{j,k} a_{ijk}\cdot x_j \cdot x_k$. 
We refer to the operation $a_{ijk}\cdot x_j \cdot x_k$ as \emph{\arraymul}.

We present two algorithms for performing the \sttsv computation.
The first, \Cref{alg:sttvv:naive}, is straightforward but does not take symmetry into account. 
It performs the computation by performing all $n^3$ ternary multiplications.

\begin{algorithm}[htb]
	\caption{\label{alg:sttvv:naive}Pseudocode of \sttsv}
	\begin{algorithmic}
		\Require $\T{A}\in \Real^{n\times n\times n}$ is a symmetric tensor and $\V{x}\in \Real^n$ is a vector.
		\Ensure $\V{y}\in \Real^n$ such that  $\V{y} = \T{A} \times_2 \V{x} \times_3 \V{x} $.
		\State Initialize vector $\V{y}$ to $0$
		\For{$i=1 \text{ to } n$}
		\For{$j=1 \text{ to } n$}
		\For{$k=1 \text{ to } n$}
		\State $y_i \pluseq a_{ijk}\cdot x_j \cdot x_k$
		\EndFor
		\EndFor
		\EndFor
	\end{algorithmic}
\end{algorithm}

Our second algorithm, \Cref{alg:sttvv:symmetry}, exploits the property that every element of a symmetric tensor remains constant under any permutation of its indices.
It only works with the lower tetrahedral portion of $\T{A}$, $a_{ijk}$ where $i\geq j\geq k$, performing all possible operations involving $a_{ijk}$ within each inner loop iteration. 
The total number of points in the iteration space is $n(n+1)(n+2)/6$ of which $n(n-1)(n-2)/6$ correspond to computations with the strict lower tetrahedral portion of $\T{A}$, $a_{ijk}$ where $i> j> k$. 

\begin{algorithm}[htb]
	\caption{\label{alg:sttvv:symmetry}\sttsv exploiting symmetric structure}
	\begin{algorithmic}
		\Require $\T{A}\in \Real^{n\times n\times n}$ is symmetric and $\V{x}\in \Real^n$
		\Ensure $\V{y}\in \Real^n$ such that  $\V{y} = \T{A} \times_2 \V{x} \times_3 \V{x} $
		\State Initialize vector $\V{y}$ to $0$
		\For{$i=1 \text{ to } n$}
		\For{$j=1 \text{ to } i$}
		\For{$k=1 \text{ to } j$}
		\If{$i\neq j$ and $j\neq k$}
		\State $y_i \pluseq 2 a_{ijk}\cdot x_j\cdot x_k$
		\State $y_j \pluseq 2a_{ijk}\cdot x_i\cdot x_k$
		\State $y_k\pluseq 2a_{ijk}\cdot x_i\cdot x_j$
		\ElsIf{$i==j$ and $j\neq k$}
		\State $y_i \pluseq 2a_{ijk}\cdot x_j \cdot x_k$
		\State $y_k \pluseq a_{ijk}\cdot x_i \cdot x_j$
		\ElsIf{$i\neq j$ and $j==k$}
		\State $y_i \pluseq a_{ijk}\cdot x_j \cdot x_k$
		\State $y_j \pluseq 2a_{ijk}\cdot x_i\cdot x_k$
		\Else
		\State $y_i \pluseq a_{ijk}\cdot x_j\cdot x_k$
		\EndIf
		\EndFor
		\EndFor
		\EndFor
	\end{algorithmic}
\end{algorithm}

\Cref{alg:sttvv:symmetry} performs $n^2(n+1)/2$ ternary multiplications, approximately half the number of those in \Cref{alg:sttvv:naive}.

\begin{definition}
	\label{def:atomic-3-array}
	A parallel atomic \sttsv algorithm computes each \arraymul on one processor.
\end{definition}

We consider only parallel atomic algorithms for \sttsv, that perform $n^2(n+1)/2$ \arraymul.
Computing a single \arraymul on one processor requires that all $3$ inputs are accessed by that processor to compute the single output value. 
This requirement is necessary for our communication lower bounds to hold. 

\subsection{Parallel Computation Model}
\label{sec:model}

We use the $\alpha-\beta-\gamma$ or MPI model \cite{Thakur:CollectiveCommunications:2005,Chan:CollectiveCommunications:2007} of parallel computation.
In this model, the computation is distributed over $P$ processors each of which has its own local memory and can operate on data in its local memory. 
In order to access data from other processors, a processor must communicate with that processor via a fully connected network with bidirectional links.
Each processor can send and receive at most one message at the same time. 
Hence, communication refers to send and receive operations that transfers data from local memory to the network and vice-versa. 
The communication cost has two components, the latency cost or number of messages, and bandwidth cost or volume of communication. 
As the bandwidth cost tends to dominate the latency cost when messages are large, we focus on bandwidth cost and refer to it as communication cost throughout the paper.
We focus on memory-independent analysis, so we assume that each processor has sufficiently large local memory.

\section{Geometric Results}
\label{sec:fundamental}
To minimize communication in parallel computations, we minimize the amount of data required to perform each processor's local computations.
We relate data to computations by relating the set of points in $\mathbb{Z}^3$ which correspond to elements in the iteration space to projections onto lower dimensional spaces which correspond to the arrays of data.
Our first lemma relates the number of points in a three dimensional lattice to the sizes of its projections onto one dimensional subspaces, relating points in a three dimensional iteration space to vector data required for the computation.

We obtain the following lemma from \cite[Lemma 4.1]{BKR18} by fixing the number of dimensions, it can also be proved by applying Loomis-Whitney inequality~\cite{LW49} twice. 
\begin{lemma}
	\label{lem:basicHBL}
	Let $V$ be a finite set of points in $\mathbb{Z}^3$. Let $\phi_i(V)$ be the projection of $V$ on the $i$-axis, i.e., all points $i$ such that there exists a $(j,k)$ so that $(i,j,k) \in V$. Define $\phi_j(V)$ and $\phi_k(V)$ similarly. 
	Then 
	$$|V| \leq |\phi_i(V)| \cdot |\phi_j(V)| \cdot |\phi_k(V)|.$$
\end{lemma}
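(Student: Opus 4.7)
My plan is to give a direct combinatorial proof based on the elementary observation that every point of $V$ lies in the Cartesian product of its three axis projections. I would first unpack the definitions: for any $(i,j,k)\in V$, we have $i\in\phi_i(V)$, $j\in\phi_j(V)$, and $k\in\phi_k(V)$ by the very definition of the projections. This yields the set inclusion $V \subseteq \phi_i(V)\times\phi_j(V)\times\phi_k(V)$ as subsets of $\mathbb{Z}^3$. Taking cardinalities and using $|A\times B\times C|=|A|\cdot|B|\cdot|C|$ delivers the inequality in one line.

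As the authors note, an alternative derivation uses the Loomis--Whitney inequality twice. Its three-dimensional form gives $|V|^2 \leq |\phi_{ij}(V)|\cdot|\phi_{jk}(V)|\cdot|\phi_{ik}(V)|$, where $\phi_{ij}$ denotes projection onto the $ij$-plane, and similarly for $\phi_{jk}$ and $\phi_{ik}$. Each planar projection is itself contained in the two-dimensional Cartesian product of the corresponding axis projections, so $|\phi_{ij}(V)|\leq |\phi_i(V)|\cdot|\phi_j(V)|$ and likewise for the other two. Multiplying these three planar bounds and extracting a square root recovers the stated inequality. I would present the direct containment argument as the actual proof and mention this alternative only as a remark, since the direct version is cleaner while the Loomis--Whitney route usefully situates the lemma within the broader family of geometric inequalities the paper leans on in \Cref{sec:bounds}.

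There is no real obstacle here: the result is a pure containment fact and the only thing to verify carefully is the inclusion $V \subseteq \phi_i(V)\times\phi_j(V)\times\phi_k(V)$, which is immediate from unpacking the definitions. The reason to isolate such a simple statement is not depth but use: it is precisely the tool needed to bound the number of iterations a processor can execute given the cardinalities of the sets of vector and output indices it holds locally, which is what the subsequent lower bound argument exploits.
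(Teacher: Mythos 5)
Your proof is correct. Note that the paper does not actually prove \Cref{lem:basicHBL} in-line: it obtains it as the three-dimensional special case of the H\"{o}lder--Brascamp--Lieb-type bound in \cite[Lemma 4.1]{BKR18}, and remarks that it also follows by applying the Loomis--Whitney inequality \cite{LW49} twice (once in three dimensions to bound $|V|^2$ by the three planar projections, and once more, in its two-dimensional form, to bound each planar projection by the product of the two corresponding axis projections --- exactly the chain you sketch as your ``remark''). Your primary argument is genuinely different and more elementary: the containment $V \subseteq \phi_i(V)\times\phi_j(V)\times\phi_k(V)$ follows immediately from the definitions, and taking cardinalities finishes the proof in one line, with no appeal to HBL or Loomis--Whitney machinery at all. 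What the direct containment buys is self-containedness and transparency for this particular rank-one-projection case; what the paper's route buys is placement of the lemma within the general family of projection inequalities it builds on (the same framework that, suitably symmetrized, underlies \Cref{lem:symmHBL} and the lower bound in \Cref{sec:bounds}), and an indication of how the statement generalizes to $d$ dimensions and to higher-dimensional projections where the trivial containment argument no longer suffices. Both derivations are valid; your instinct to present the containment as the proof and the Loomis--Whitney derivation as a remark is sound.
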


We now extend \Cref{lem:basicHBL} for a symmetric set of points.
\begin{lemma}
	\label{lem:symmHBL}
	Let $V$ be a finite set of points contained in $\{(i, j, k)\in\mathbb{Z}^3 \, | \, i>j>k\}$.  Let $\phi_i(V)$ be the projection of $V$ on the $i$-axis, i.e., all points $i$ such that there exists a $(j,k)$ so that $(i,j,k) \in V$. Define $\phi_j(V)$ and $\phi_k(V)$ similarly. 
	Then 
		$$6|V| \leq |\phi_i(V) \cup \phi_j(V) \cup \phi_k(V)|^3.$$
\end{lemma}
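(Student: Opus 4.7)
The plan is to reduce the symmetric statement to the non-symmetric \Cref{lem:basicHBL} by symmetrizing the set $V$. Concretely, let $S = \phi_i(V) \cup \phi_j(V) \cup \phi_k(V)$ and define
\[
V^{\mathrm{sym}} = \{\, \sigma(i,j,k) \,:\, (i,j,k)\in V,\ \sigma\in \mathfrak{S}_3 \,\},
\]
where $\mathfrak{S}_3$ acts by permuting the three coordinates. Because $V$ is contained in the strict lower tetrahedron $\{i>j>k\}$, the three coordinates of every point of $V$ are pairwise distinct, so the six permutations of a given triple are all distinct points in $\mathbb{Z}^3$, and moreover the permutations of two distinct triples in $V$ cannot coincide. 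Hence $|V^{\mathrm{sym}}| = 6|V|$, which will give the factor of $6$ on the left-hand side of the target inequality.

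Next I would compute the three axis projections of $V^{\mathrm{sym}}$. For any $(i,j,k)\in V$, its orbit under $\mathfrak{S}_3$ contributes $i$, $j$, and $k$ to the first coordinate (via the permutations that place each of these in the first slot), and similarly to the second and third coordinates. Therefore
\[
\phi_i(V^{\mathrm{sym}}) \;=\; \phi_j(V^{\mathrm{sym}}) \;=\; \phi_k(V^{\mathrm{sym}}) \;=\; \phi_i(V)\cup \phi_j(V)\cup \phi_k(V) \;=\; S.
\]
This is the step that exploits the symmetric structure: each axis projection of the symmetrized set collapses to the same union $S$.

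Finally, I would apply \Cref{lem:basicHBL} to the finite set $V^{\mathrm{sym}} \subset \mathbb{Z}^3$, obtaining
\[
6|V| \;=\; |V^{\mathrm{sym}}| \;\leq\; |\phi_i(V^{\mathrm{sym}})|\cdot |\phi_j(V^{\mathrm{sym}})|\cdot |\phi_k(V^{\mathrm{sym}})| \;=\; |S|^3,
\]
which is precisely the claimed inequality.

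I do not expect a real obstacle here: the only subtlety is verifying that the six permutations of distinct triples stay distinct, which is immediate from the strictness $i>j>k$. If one wanted to state the lemma without strict inequalities, one would have to adjust the constant $6$ to account for orbits of size less than $6$ coming from the diagonals, but under the stated hypothesis this issue does not arise.
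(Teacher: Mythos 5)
Your proposal is correct and follows essentially the same route as the paper's proof: symmetrize $V$ to its $\mathfrak{S}_3$-orbit (the paper's $\tilde{V}$), verify that the orbit has exactly $6|V|$ points thanks to the strict inequalities $i>j>k$, observe that each axis projection of the symmetrized set equals $\phi_i(V)\cup\phi_j(V)\cup\phi_k(V)$, and apply \Cref{lem:basicHBL}. No gaps; the argument matches the paper's in both structure and detail.
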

\begin{proof}
    Consider the set $$\tilde{V} = \{(i,j,k), (i,k,j), (j, i, k), (j,k,i), (k, i, j), (k, j, i) \mid (i, j, k)\in V \}$$ and its projections. 
    To begin, we will show that $|\tilde{V}| = 6|V|$. 
    For each $(i, j, k)\in V $, there are $6$ corresponding elements in $\tilde{V}$. 
    Therefore $|\tilde{V}|\leq 6|V|$. 
    If $(i, j, k), (i', j', k')\in V$ and $(i, j, k) \neq (i', j', k')$, then $\{\text{all permutations of } (i, j, k)\} \cap \{\text{all permutations of } (i', j', k')\} = \emptyset$. 
    Therefore $|\tilde{V}|\geq 6|V|$. Thus we have $|\tilde{V}| = 6|V|$.
	
	Now we will show that $\phi_i(\tilde{V}) = \phi_i(V) \cup \phi_j(V) \cup \phi_k(V)$. If $i \in \phi_i(\tilde{V})$, then there exists a $(j, k)$ such that $(i, j, k)\in V$ or $(j, i, k)\in V$ or $(j, k, i) \in V$. 
	Therefore $\phi_i(\tilde{V})\subseteq \phi_i(V)\cup\phi_j(V) \cup \phi_k(V)$. 
	If $i \in \phi_i(V) \cup \phi_j(V) \cup \phi_k(V)$, then there exists a $(j, k)$ such that $(i, j, k)\in V$ or $(j, i, k)\in V$ or $(j, k, i) \in V$, thus $(i, j, k)\in\tilde{V}$. 
	Therefore $\phi_i(V)\cup\phi_j(V) \cup \phi_k(V) \subseteq \phi_i(\tilde{V})$. 
	Thus we have $\phi_i(\tilde{V}) = \phi_i(V) \cup \phi_j(V) \cup \phi_k(V)$. 
	The same set of arguments shows that $\phi_j(\tilde{V}) = \phi_k(\tilde{V}) = \phi_i(V) \cup \phi_j(V) \cup \phi_k(V)$.
	
	By \Cref{lem:basicHBL}, we know that $|\tilde{V}| \leq |\phi_i(\tilde{V})||\phi_j(\tilde{V})||\phi_k(\tilde{V})|$. 
	Substituting the above results in this inequality yields $6|V| \leq |\phi_i(V) \cup \phi_j(V) \cup \phi_k(V)|^3$.
\end{proof}

\Cref{lem:symmHBL} indicates that when $\phi_i(V) = \phi_j(V) = \phi_k(V)$, these projections can be utilized in at most $\frac{|\phi_i(V)|^3}{6}$ points of a $3$-dimensional symmetric iteration space.

\section{Memory-Independent Communication Bounds}
\label{sec:bounds}
Our lower bound arguments minimize the total amount of data required to perform a processor's assigned computation.
Ultimately our lower bound relies on a constrained minimization problem which we state abstractly in \cref{lem:opt}.
The objective function of the minimization problem gives the number of elements which must be accessed to perform the required computation.
The first constraint comes from the assumption that the processor performs $1/P$th of the ternary multiplications associated with the strict lower tetrahedron of the symmetric tensor.
The second constraint comes from \cref{lem:symmHBL}.

We can prove the following lemma by taking minimum values of both variables.
\begin{lemma}
	\label{lem:opt}
	Consider the following optimization problem:
	$$\min x_1+ 2x_2$$
	such that 
	$$\frac{n(n-1)(n-2)}{6P} \leq x_1,$$
	$$\frac{n(n-1)(n-2)}{P} \leq x_2^3,$$
	where $P\geq 1 $, and $n$ is a positive integer. The optimal solution occurs at $\left(\frac{n(n-1)(n-2)}{6P}, \left(\frac{n(n-1)(n-2)}{P}\right)^{1/3}\right)$.
\end{lemma}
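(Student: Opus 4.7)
The plan is to observe that this optimization problem is essentially trivial because the two constraints decouple: the first constraint involves only $x_1$, and the second involves only $x_2$. The objective $x_1 + 2x_2$ has strictly positive coefficients on both variables, so it is monotonically increasing in each coordinate. Minimizing a coordinate-wise increasing objective over a product of one-dimensional feasible sets reduces to minimizing each coordinate independently subject to its own constraint.

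First I would rewrite each constraint to isolate the variable. The first constraint is already in the form $x_1 \geq \frac{n(n-1)(n-2)}{6P}$, so the smallest feasible value of $x_1$ is $\frac{n(n-1)(n-2)}{6P}$. For the second constraint, since $n$ is a positive integer and $P \geq 1$, the quantity $\frac{n(n-1)(n-2)}{P}$ is non-negative, so taking real cube roots preserves the inequality and gives $x_2 \geq \left(\frac{n(n-1)(n-2)}{P}\right)^{1/3}$; the smallest feasible value of $x_2$ is therefore this cube root (the degenerate cases $n=1,2$ merely reduce the right-hand side to zero, still consistent with the stated optimizer).

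Finally, I would combine these. For any feasible pair $(x_1, x_2)$, we have $x_1 \geq \frac{n(n-1)(n-2)}{6P}$ and $2x_2 \geq 2\left(\frac{n(n-1)(n-2)}{P}\right)^{1/3}$, and adding these inequalities yields the claimed lower bound on the objective. Equality holds simultaneously precisely at the point $\left(\frac{n(n-1)(n-2)}{6P},\left(\frac{n(n-1)(n-2)}{P}\right)^{1/3}\right)$, which is therefore the optimal solution. There is no genuine obstacle here; the only subtlety worth flagging is that the two constraints do not interact, which is what makes the trivial coordinate-wise minimization valid and which the authors themselves note by saying the lemma follows by taking the minimum values of both variables.
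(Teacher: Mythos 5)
Your proposal is correct and follows the same approach as the paper, which simply notes that the lemma is proved by taking the minimum values of both variables since the constraints decouple and the objective is increasing in each coordinate. Your additional care with the cube root and the degenerate cases $n=1,2$ is fine but does not change the argument.
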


\subsection{Communication Lower Bounds}
\label{sec:bounds:mainresults}
We now apply the solution of \cref{lem:opt} to derive the communication bounds for the \sttsv computation, presented as \Cref{thm:memindeplb}. 
We consider a single processor that performs the computations associated with $1/P$th of iteration points and has access to $1/P$th of the data. 
To obtain the lower bound on communication, we subtract the quantity of data the processor stores at the start and end of the computation from the required amount of data to perform its computations.

\begin{theorem}
	\label{thm:memindeplb}
	Consider the \sttsv computation, $\V{y}=\T{A}\times_2 \V{x}\times_3 \V{x}$, where $\V{x}$ and $\V{y}$ are vectors of length $n$ and $\T{A}$ has dimensions $n\times n\times n$. 
	Suppose a parallel atomic algorithm using $P$ processors begins with one copy of $\V{x}$ and the strict lower tetrahedron of $\T{A}$, and ends with one copy of $\V{y}$. 
	If each processor performs the ternary multiplications associated with $1/P$th of the points in the iteration space involving the strict lower tetrahedron of $\T{A}$, then at least one processor must communicate at least $ 2\left(\frac{n(n-1)(n-2)}{P}\right)^{1/3} - 2 \frac{n}{P}$ elements.
\end{theorem}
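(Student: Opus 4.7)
My approach is to combine the symmetric Loomis--Whitney inequality of \Cref{lem:symmHBL} with the constrained optimum of \Cref{lem:opt} to lower-bound the data a single processor must touch, and then extract a per-processor communication bound by a pigeonhole/averaging argument.

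First, I fix a processor $p$ and let $V_p \subseteq \{(i,j,k)\in\mathbb{Z}^3 : i>j>k\}$ denote the set of iteration points assigned to $p$, so by hypothesis $|V_p| = n(n-1)(n-2)/(6P)$. For each $(i,j,k)\in V_p$, atomicity together with the symmetric update structure of \Cref{alg:sttvv:symmetry} forces $p$ to access $a_{ijk}$, the three entries $x_i,x_j,x_k$ (since the three updates $y_i \pluseq a_{ijk}x_jx_k$, $y_j \pluseq a_{ijk}x_ix_k$, $y_k \pluseq a_{ijk}x_ix_j$ are all assigned to $p$), and the three output positions $y_i,y_j,y_k$. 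Hence $p$ must touch a set $T_p$ of tensor entries with $|T_p|\geq |V_p|$, a set $X_p$ of $\V{x}$-indices equal to $\phi_i(V_p)\cup\phi_j(V_p)\cup\phi_k(V_p)$, and a set $Y_p$ of $\V{y}$-indices coinciding with the same union. \Cref{lem:symmHBL} yields $|X_p|=|Y_p|\geq (6|V_p|)^{1/3} = (n(n-1)(n-2)/P)^{1/3}$; denote this quantity by $\xi$. Instantiating \Cref{lem:opt} with $x_1=|T_p|$ and $x_2=\xi$ shows that $p$ must access at least $|V_p|+2\xi$ distinct elements during the course of its work.

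Next I convert access counts into communication costs. Any element accessed by $p$ that $p$ does not own at the start of the computation must be received, and any partial $\V{y}$-sum produced at $p$ whose final owner is a different processor must be sent. Letting $t_p,v_p,w_p$ denote the number of tensor, $\V{x}$, and $\V{y}$ entries held by $p$ at the appropriate time, a direct accounting gives
\[ c_p \;\geq\; (|T_p|-t_p) + (|X_p|-v_p) + (|Y_p|-w_p). \]
Summing over $p$, the hypotheses that only one copy each of $\T{A}$, $\V{x}$, and $\V{y}$ exists give $\sum_p t_p = n(n-1)(n-2)/6$ and $\sum_p v_p = \sum_p w_p = n$. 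Since $\sum_p |T_p|\geq P|V_p| = n(n-1)(n-2)/6$, the tensor contributions cancel, and the remaining terms yield $\sum_p c_p \geq 2P\xi - 2n$. Therefore $\max_p c_p \geq P^{-1}\sum_p c_p \geq 2\xi - 2n/P = 2(n(n-1)(n-2)/P)^{1/3} - 2n/P$, which is the stated bound.

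The main point requiring care is the claim that the $\V{y}$-indices touched by $p$ span the full projection union $\phi_i(V_p)\cup\phi_j(V_p)\cup\phi_k(V_p)$, and not merely $\phi_i(V_p)$. This relies on the atomic assumption: assigning the iteration point $(i,j,k)$ with $i>j>k$ to $p$ forces all three symmetric ternary multiplications at that point onto $p$, producing updates on three distinct $y$-positions per tetrahedral tensor entry. A secondary subtlety is that the per-processor inequality for $c_p$ is only really valid after clipping each summand at zero; this is harmless here since we ultimately only need to lower-bound the sum over processors, where the negative slack is absorbed.
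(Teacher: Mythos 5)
Your proposal is correct and follows essentially the same route as the paper: it uses \Cref{lem:symmHBL} together with \Cref{lem:opt} to lower-bound the number of tensor and vector elements a processor must access, and then obtains communication by subtracting the data the processor owns at the start and end. The only difference is bookkeeping: the paper pigeonholes a single processor that owns at most $1/P$th of the inputs and output and bounds its traffic directly, whereas you sum the access-minus-ownership deficits over all processors and average, which yields the same bound.
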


\begin{proof}

	To begin we note that there are $\frac{n(n-1)(n-2)}{6}$ elements in the strict lower tetrahedron of $\T{A}$ and $n$ elements each in $\V{x}$ and $\V{y}$.
	There must be a processor that owns at most $1/P$th of these elements as otherwise input and output elements need to be replicated contradicting our assumptions.

	We determine how many elements this processor must access to perform the ternary multiplications of the assigned iteration points. Let $F$ be the set of points $(i,j,k)$ associated with the strict lower tetrahedron of $\T{A}$ assigned to this processor.
	Then $F\subseteq \{(i,j,k)\in\mathbb{Z}^3|i>j>k\}$ and $|F| \geq \frac{n(n-1)(n-2)}{6P}$
	by our assumption that this processor performs $\frac{n(n-1)(n-2)}{6P}$ iteration points.
	
	By \Cref{lem:symmHBL}, we know that 
	$$|\phi_i(F) \cup \phi_j(F) \cup \phi_k(F)|^3 \geq 6|F| \geq \frac{n(n-1)(n-2)}{P}\text.$$ 
	
	For each point, in $F$, exactly one element of $\T{A}$ is accessed, thus the total number of accessed elements of $\T{A}$ is $|F|$. 
	The union of projections of $F$, $\phi_i(F) \cup \phi_j(F) \cup \phi_k(F)$, gives the indices of the elements of each vector $\V{x}$ and $\V{y}$ accessed or modified by the processor. 
	To minimize the communication we want to minimize the number of elements accessed by this processor.
	Thus we want to minimize $|F| + 2|\phi_i(F) \cup \phi_j(F) \cup \phi_k(F)|$ subject to the above constraints. 
	By \Cref{lem:opt}, the minimum number of elements that must be accessed by this processor is $\frac{n(n-1)(n-2)}{6P} + 2\left(\frac{n(n-1)(n-2)}{P}\right)^{1/3}$. 

	As the processor owns at most $\frac{n(n-1)(n-2)}{6P} + \frac{2n}{P}$ elements, the number of elements communicated must be at least
	$$\frac{n(n-1)(n-2)}{6P} + 2\left(\frac{n(n-1)(n-2)}{P}\right)^{1/3} - \left(\frac{n(n-1)(n-1)}{6P}+\frac{2n}{P} \right)= 2\left(\frac{n(n-1)(n-2)}{P}\right)^{1/3} - 2 \frac{n}{P}\text.$$
\end{proof}

\section{Tetrahedral Block Partitioning}
\label{sec:tbc}

Communication optimal algorithms for several symmetric matrix computations have been proposed recently~\cite{BELV22,ABGKR23,ABGKRV25,ABC+23}. 
These algorithms improve the communication complexity for computations involving symmetric matrices by partitioning the lower triangle of a symmetric matrix into triangle blocks which allows all computations involving an element of the symmetric matrix to be computed together. 
A triangle block is the set of pairs corresponding to the strict lower triangle of the product of a set of indices with itself. 
For example, given a set of indices $\{1,4,6,8\}$, the associated triangle block is the set of pairs $\{(4,1), (6,1), (8,1), (6,4), (8,4), (8,6)\}$. 
By using a triangle block partition of the symmetric matrix, a processor can perform operations with $\mathcal{O}(r^2)$ elements of the symmetric matrix by using only $r$ elements from the same column of a non-symmetric matrix. 

We generalize the definition of triangle blocks to $3$-dimensions so that we can partition the lower tetrahedron of a symmetric $3$-dimensional tensor into tetrahedral blocks. 
Let $R$ be a subset of indices of the symmetric $3$-dimensional tensor, $R\subseteq \{1,\ldots,n\}$, then the tetrahedral block defined by $R$ is $TB_3(R)=\{(i,j,k) \; | \; i,j,k\in R, i > j >k\}$. 
For example, $TB_3(\{1,4,6,8\}) = \{(6,4,1), (8,4,1), (8,6,1), (8,6,4)\}$. 
In order to partition the lower tetrahedron of a symmetric $3$-dimensional tensor into tetrahedral blocks, we rely on Steiner systems which are well studied combinatorial objects.

\begin{definition}{\cite[Section 3]{Gowers2017}}
	A Steiner $(n, r, s)$-system is a collection $\Sigma$ of subsets of size $r$ from the set $S=\{1,\ldots,n\}$ such that every subset of $S$ of size $s$ is contained in exactly one set from $\Sigma$.
\end{definition}

While general results on the existence of Steiner systems are due to Keevash~\cite{Keevash2014}, existence results for $s=3$ are due to Wilson~\cite{Wilson1975}.

\begin{theorem}{\cite[Corollary A]{Wilson1975}}
	\label{thm:divConditions}
	Given a positive integer $r$, there exists an $n_0\in\mathbb{Z}$ such that Steiner $(n,r,3)$ systems exist for all $n \geq n_0$ for which $r-2$ divides $n-2$, $(r-1)(r-2)$ divides $(n-1)(n-2)$ and $r(r-1)(r-2)$ divides $n(n-1)(n-2)$.
\end{theorem}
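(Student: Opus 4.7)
The statement is Wilson's asymptotic existence theorem for Steiner $(n,r,3)$-systems, a landmark result in combinatorial design theory whose original proof spans many pages of delicate construction. My plan would follow the standard strategy used for such asymptotic existence results: set up a recursive construction, seed it with a library of small starter designs, and close under a density argument that leverages precisely the three divisibility hypotheses appearing in the statement.

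First, I would introduce the auxiliary structures needed for the recursion. Let $B(r,3)$ denote the set of positive integers $n$ for which a Steiner $(n,r,3)$-system exists; the goal is to show $B(r,3)$ contains every sufficiently large $n$ satisfying the divisibility conditions. Alongside Steiner systems themselves, one introduces \emph{group-divisible designs} (GDDs) that cover triples across distinct groups exactly once, and \emph{pairwise balanced designs} (PBDs). These play the role of scaffolding into which smaller Steiner systems can be substituted to yield larger ones. A key preliminary step is verifying that the three divisibility conditions are preserved and composed correctly under the combinatorial operations to follow.

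Next, I would establish the analogue for $3$-designs of \textbf{Wilson's Fundamental Construction}: given a master $3$-GDD on a small point set, together with compatible ingredient Steiner $(n', r, 3)$-systems on the groups (or appropriately inflated blocks), one can assemble a Steiner $(n, r, 3)$-system on a much larger point set whose order $n$ depends multiplicatively on the ingredient parameters. This shows that $B(r,3)$ is closed under a rich family of gluing operations. Combined with a finite library of starter designs obtained from direct constructions such as affine and projective geometries over $\mathbb{F}_q$, cyclic difference-family constructions, and small exhaustive/computer searches, one obtains an infinite supply of admissible orders populating every relevant residue class modulo $r(r-1)(r-2)$.

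The main obstacle is the third pillar: verifying that the three divisibility conditions alone suffice to guarantee existence for all large $n$. The conditions carve out a union of arithmetic progressions modulo $r(r-1)(r-2)$, and one must show that each such progression contains enough elements of $B(r,3)$ so that the Fundamental Construction propagates existence to every sufficiently large $n$ in it. Wilson's 1975 paper handles this through intricate bookkeeping of residue classes, culminating in a density argument that exhibits, for each large admissible $n$, an explicit decomposition achieving the desired Steiner system. Given the depth of this analysis, I would rely on the constructions and density computations of~\cite{Wilson1975} rather than reproduce them. For our algorithmic application, only an infinite family of admissible $n$ is ultimately needed to populate the tetrahedral block partitions, so one could alternatively specialise to classical geometric families (for instance affine designs when $r$ is a prime power) and sidestep the full force of Wilson's theorem, though at the cost of restricting the attainable orders $n$.
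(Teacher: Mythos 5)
This statement is not proved in the paper at all: it is imported verbatim as a known result, with the citation to Wilson standing in for the proof, so there is no internal argument to compare your proposal against. Your proposal is, in substance, the same move --- you sketch the standard asymptotic-existence machinery (GDDs, PBDs, a Fundamental-Construction-style recursion, closure under residue classes) and then explicitly defer the real work to the cited source, which is legitimate here. One caution on the sketch itself: the machinery you describe (Wilson's Fundamental Construction, PBD-closure) is native to $2$-designs, and its extension to $s=3$ is far from a routine adaptation; asymptotic existence for general block size with $s=3$ is substantially harder (the paper itself points to Keevash for the general case), so your outline should not be read as a workable proof skeleton but only as context for the citation. Your closing observation is the most relevant one for this paper: the algorithm never needs the full strength of the theorem, since it works only with the explicit spherical-geometry family of Steiner $(q^2+1,q+1,3)$ systems (the paper's Theorem on $\mathrm{PGL}_2(q^\alpha)$ acting on $\mathbb{F}_{q^\alpha}\cup\{\infty\}$), exactly the specialization you suggest as an alternative.
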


The following lemmas, which follow directly from Theorem 3.3 of~\cite{crcCombinatorialDesigns1996}, allow us to count the number of subsets in a Steiner $(n,r,3)$ system which share two or 1 element subsets of the original set $S=\{1,\ldots,n\}$.

\begin{lemma}
\label{lem:countSubsetsContainPair}
If $\Sigma$ is a Steiner $(n,r,3)$ system and $X$ any 2 element subset of $S$, then the number of subsets in $\Sigma$ containing $X$ is $\frac{n-2}{r-2}$.
\end{lemma}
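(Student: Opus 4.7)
The plan is to prove the lemma by a standard double-counting argument, specialized to the case $t=3$ of the usual replication-number computation for $t$-designs. Write $X = \{a,b\}$, and let $\lambda$ denote the number of blocks $B \in \Sigma$ with $X \subseteq B$; this is exactly the quantity we want to identify with $\frac{n-2}{r-2}$.

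First I would introduce the set of incidences
\[
\mathcal{P} \;=\; \bigl\{\,(y, B) \,\big|\, y \in S\setminus X,\; B \in \Sigma,\; \{a,b,y\} \subseteq B\,\bigr\}
\]
and count it in two different ways. For the first count, fix an element $y \in S\setminus X$: the $3$-element subset $\{a,b,y\}$ lies in exactly one block of $\Sigma$ by the defining property of a Steiner $(n,r,3)$-system, so $y$ contributes exactly one pair to $\mathcal{P}$. Summing over the $n-2$ possible choices of $y$ yields $|\mathcal{P}| = n-2$.

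For the second count, fix a block $B \in \Sigma$ with $X \subseteq B$. Since $|B| = r$ and $|X| = 2$, there are exactly $r-2$ choices of $y \in B\setminus X$, each of which produces a valid triple $(y,B) \in \mathcal{P}$. Summing over the $\lambda$ such blocks gives $|\mathcal{P}| = \lambda(r-2)$. Equating the two expressions of $|\mathcal{P}|$ yields $\lambda(r-2) = n-2$, hence $\lambda = \frac{n-2}{r-2}$, as claimed.

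There is no real obstacle here; the argument is entirely elementary. The one point worth flagging is that the answer must be an integer, which amounts to requiring $r-2 \mid n-2$. This is guaranteed by the divisibility conditions in \Cref{thm:divConditions}, which are in fact necessary for a Steiner $(n,r,3)$-system to exist, so the count is automatically well-defined under the hypotheses of the lemma.
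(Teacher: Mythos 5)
Your double-counting argument is correct: counting incidences $(y,B)$ with $y\in S\setminus X$ and $\{a,b,y\}\subseteq B$ once via the Steiner property (each $y$ determines exactly one block, giving $n-2$) and once via the blocks through $X$ (each contributes $r-2$ elements $y$, giving $\lambda(r-2)$) yields $\lambda=\frac{n-2}{r-2}$ with no gaps. The paper itself offers no argument here; it states that the lemma follows directly from Theorem 3.3 of the handbook \cite{crcCombinatorialDesigns1996}, which is the general statement that a $t$-design is also an $s$-design for $s<t$ with replication number $\lambda_s$ given by a ratio of binomial coefficients, specialized to $t=3$, $s=2$, $\lambda=1$. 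So your route is genuinely different in presentation though not in substance: you reprove the special case from scratch, which makes the lemma self-contained and makes transparent exactly where the Steiner property with $s=3$ is used, whereas the paper's citation is shorter and covers \Cref{lem:countSubsetsContainElement} (the one-element count) by the same stroke. Your closing remark about integrality is a nice touch and consistent with the divisibility conditions in \Cref{thm:divConditions}; strictly speaking it is a consequence of the count rather than a hypothesis you need, so it does not affect correctness either way.
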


\begin{lemma}
	\label{lem:countSubsetsContainElement}
If $\Sigma$ is a Steiner $(n,r,3)$ system then any element of $S$ appears in exactly $\frac{(n-1)(n-2)}{(r-1)(r-2)}$ subsets.
\end{lemma}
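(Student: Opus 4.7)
The plan is to prove the claim by a double counting argument. Fix an arbitrary element $x\in S$ and let $d$ denote the number of subsets (blocks) of $\Sigma$ that contain $x$; our goal is to show $d=\tfrac{(n-1)(n-2)}{(r-1)(r-2)}$.

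The cleanest route is to build on \Cref{lem:countSubsetsContainPair}. I would count, in two different ways, the set of pairs $(y,B)$ such that $y\in S\setminus\{x\}$, $B\in\Sigma$, and $\{x,y\}\subseteq B$. Counting by $y$ first: for each of the $n-1$ choices of $y$, \Cref{lem:countSubsetsContainPair} applied to the pair $\{x,y\}$ gives $\tfrac{n-2}{r-2}$ blocks, so the total is $(n-1)\cdot\tfrac{n-2}{r-2}$. Counting by $B$ first: each block containing $x$ contributes exactly $r-1$ choices of $y$ (the other elements of $B$), giving $d(r-1)$. Equating the two expressions and solving for $d$ yields the stated formula.

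Alternatively, one can double count the pairs $(T,B)$ where $T$ is a $3$-subset of $S$ with $x\in T$ and $B\in\Sigma$ with $T\subseteq B$; the Steiner property forces exactly one block per $T$, giving $\binom{n-1}{2}$ on one side and $d\binom{r-1}{2}$ on the other, which simplifies to the same answer. Either version is a few lines; I prefer the first since it reuses the previous lemma and keeps the two lemmas stylistically parallel.

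There is no real obstacle here: the only thing to check is that the count on the block side is correct, i.e.\ that no block can appear twice in the enumeration, which is immediate since each block is a set and the element $y$ ranges over $B\setminus\{x\}$ without repetition. I would also briefly note that integrality of $d$ is automatic from the counting, which reassures the reader that the divisibility conditions appearing in \Cref{thm:divConditions} are consistent with this formula.
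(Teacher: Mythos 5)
Your argument is correct. Note that the paper does not prove this lemma at all: it simply observes that it (together with \Cref{lem:countSubsetsContainPair}) follows directly from a standard result on block designs (Theorem~3.3 of the cited combinatorial designs handbook), which is precisely the double-counting fact you reprove. Both of your variants are valid; the first, which counts pairs $(y,B)$ with $\{x,y\}\subseteq B$ and reuses \Cref{lem:countSubsetsContainPair}, and the second, which counts incidences of $3$-subsets through $x$ directly via the Steiner property, each give $d(r-1)=(n-1)\cdot\frac{n-2}{r-2}$ (respectively $d\binom{r-1}{2}=\binom{n-1}{2}$) and hence the stated formula, so your self-contained proof is a harmless strengthening of the paper's citation-only treatment.
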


A well known infinite family of Steiner $(n,r,3)$ systems can be constructed from finite spherical geometries. 
\begin{theorem}{\cite[Example 3.23]{crcCombinatorialDesigns1996}}
\label{thm:sphericalDesigns}
Let $q$ be a prime power and $\alpha > 0$ an integer. 
Then $G=\rm{PGL}_2(q^\alpha)$ acts sharply 3-transitively on $\mathbb{F}_{q^\alpha}\cup\{\infty\}$.
If $S\subseteq \mathbb{F}_{q^\alpha}\cup\{\infty\}$ is the natural inclusion of $\mathbb{F}_q\cup\{\infty\}$, then the orbit of $S$ under the action of $G$ is a Steiner $(q^\alpha+1,q+1,3)$ system.
\end{theorem}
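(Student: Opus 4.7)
The plan is to prove the theorem in two stages: first, that $G = \mathrm{PGL}_2(\mathbb{F}_{q^\alpha})$ acts sharply 3-transitively on the projective line $\mathbb{F}_{q^\alpha}\cup\{\infty\}$; and second, that sharp 3-transitivity, combined with an explicit identification of the setwise stabilizer of $S$, forces the orbit of $S$ under $G$ to be a Steiner $(q^\alpha+1,q+1,3)$ system.

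For the first stage, I would represent elements of $G$ as M\"obius transformations $z\mapsto (az+b)/(cz+d)$ with $ad-bc\ne 0$, acting on $\mathbb{F}_{q^\alpha}\cup\{\infty\}$ in the standard way (sending $-d/c$ to $\infty$ and $\infty$ to $a/c$). Given any ordered triple of distinct points $(p_1,p_2,p_3)$ and the reference triple $(0,1,\infty)$, an explicit cross-ratio formula yields a M\"obius transformation mapping one to the other, and direct computation shows it is well defined, lies in $G$, and is uniquely determined by the triple. Composing two such maps then gives the unique element of $G$ sending any ordered triple of distinct points to any other, which is precisely sharp 3-transitivity.

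For the second stage, fix a 3-element subset $T=\{x,y,z\}$ of $\mathbb{F}_{q^\alpha}\cup\{\infty\}$. Existence of an orbit element containing $T$ follows by picking any ordered triple of distinct elements of $S$, say $(\infty,0,1)$, and applying sharp 3-transitivity to obtain a unique $g\in G$ that maps it to $(x,y,z)$; then $T\subseteq g(S)$. For uniqueness, I would count $g\in G$ with $T\subseteq g(S)$: such a $g$ is determined by the ordered triple $(g^{-1}(x),g^{-1}(y),g^{-1}(z))$ of distinct elements of $S$, giving exactly $(q+1)q(q-1)$ choices. Two elements $g,g'$ produce the same block $g(S)=g'(S)$ if and only if $g^{-1}g'\in\mathrm{Stab}_G(S)$, so the number of distinct blocks through $T$ is $(q+1)q(q-1)/|\mathrm{Stab}_G(S)|$, and the Steiner property reduces to showing $|\mathrm{Stab}_G(S)|=(q+1)q(q-1)$.

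The main obstacle is identifying the setwise stabilizer. One inclusion is immediate: $\mathrm{PGL}_2(\mathbb{F}_q)\subseteq\mathrm{Stab}_G(S)$, and its cardinality is exactly $(q+1)q(q-1)$. The delicate direction is that any $g\in G$ with $g(S)=S$ admits a matrix representative with entries in $\mathbb{F}_q$. I would argue this again via sharp 3-transitivity: the restriction of $g$ sends the fixed triple $(\infty,0,1)\subset S$ to another ordered triple of distinct elements of $S\subseteq\mathbb{F}_q\cup\{\infty\}$; the cross-ratio reconstruction of $g$ from its values on these three points then forces the M\"obius coefficients to lie in $\mathbb{F}_q$, so $g\in\mathrm{PGL}_2(\mathbb{F}_q)$. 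As a sanity check, multiplying the orbit size $|G|/|\mathrm{Stab}_G(S)|$ by the number $\binom{q+1}{3}$ of 3-subsets contained in one block recovers $\binom{q^\alpha+1}{3}$, the total number of 3-subsets of the projective line, consistent with the claimed Steiner structure.
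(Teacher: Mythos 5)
Your proposal is correct, but note that the paper does not prove this statement at all: it is imported verbatim from the literature (Example 3.23 of the CRC Handbook of Combinatorial Designs), so there is no in-paper argument to compare against. What you supply is the classical proof of Witt's spherical-geometry construction, and it holds together: sharp 3-transitivity of $\mathrm{PGL}_2(\mathbb{F}_{q^\alpha})$ via cross-ratios; a bijection between elements $g$ with $T\subseteq g(S)$ and ordered triples of distinct points of $S$, giving $(q+1)q(q-1)$ such $g$; the observation that $g(S)=g'(S)$ exactly when $g,g'$ lie in the same left coset of the setwise stabilizer, so the number of blocks through $T$ is $(q+1)q(q-1)/|\mathrm{Stab}_G(S)|$; and the identification $\mathrm{Stab}_G(S)=\mathrm{PGL}_2(\mathbb{F}_q)$, which is the only genuinely delicate step and which your cross-ratio reconstruction handles correctly (any $g$ preserving $S$ sends $(\infty,0,1)$ to a distinct triple in $\mathbb{F}_q\cup\{\infty\}$, and the unique M\"obius map realizing this has coefficients in $\mathbb{F}_q$; equivalently, it coincides with the unique element of $\mathrm{PGL}_2(\mathbb{F}_q)$ doing the same by sharp 3-transitivity of the subgroup on $S$). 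Two small points you should make explicit in a full write-up: the embedding $\mathrm{PGL}_2(\mathbb{F}_q)\hookrightarrow\mathrm{PGL}_2(\mathbb{F}_{q^\alpha})$ is injective (a matrix over $\mathbb{F}_q$ that is scalar over $\mathbb{F}_{q^\alpha}$ is already scalar over $\mathbb{F}_q$), so the stabilizer really has order $(q+1)q(q-1)$ and not a proper divisor of it; and the cross-ratio formula needs the routine case analysis when one of the three prescribed points is $\infty$. Your closing count, that the orbit size times $\binom{q+1}{3}$ equals $\binom{q^\alpha+1}{3}$, is a valid consistency check and matches the block count $q(q^2+1)$ the paper uses later in the special case $\alpha=2$.
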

To construct these systems, one can compute the images of $\mathbb{F}_q\cup\{\infty\}$ under one element of each coset in $PGL_2(q^\alpha)/PGL_2(q)$.

A Steiner $(q^\alpha + 1,q+1,3)$ system has $\frac{(q^\alpha+1)q^\alpha(q^\alpha-1)}{(q+1)q(q-1)}$ subsets. 
A tetrahedral block can be constructed from each subset.
We will use tetrahedral block constructions from Steiner $(q^2+1,q+1,3)$ systems to simplify the analysis of our algorithm.
In this case, there are $|\Sigma|= \frac{(q^2+1)q^2(q^2-1)}{(q+1)q(q-1)} = q(q^2+1)$ blocks, any index appears in $\frac{q^2(q^2-1)}{q(q-1)}= q(q+1)$ blocks by \Cref{lem:countSubsetsContainElement}, and two distinct indices together appear in $\frac{q^2-1}{q-1} = q+1$ blocks by \Cref{lem:countSubsetsContainPair}.

\subsection{Data distributions and assignment of computations}
\label{sec:dataDistributions}
As the communication lower bounds of \Cref{sec:bounds} do not contain any terms from communicating the symmetric tensor, we design algorithms that only communicate entries from the vectors.
To do this, each processor begins the computation with a tetrahedral block of the symmetric tensor and performs all computations that involve elements of the block.
Thus the number of processors $P$ must match the number of tetrahedral blocks produced by our blocking scheme, and we assume that $P=q(q^2+1)$ for some prime power $q$.

In order to partition the lower tetrahedron of a tensor with dimension $n$ using a Steiner $(q^2+1,q+1,3)$ system we follow Al Daas et al.~\cite{ABGKR23} and use a tetrahedral block of blocks.
For the given $n$ and $P=q(q^2+1)$, we divide the indices  $\{1,\ldots,n\}$ into contiguous parts, each of length $b=n/(q^2+1)$.  
Then we work with $b\times b\times b$ blocks of the symmetric tensor, and $b$ contiguous elements of the vector.
If $q^2+1$ does not divide $n$, we can pad our tensor to dimension $n'$, where $n'$ is the smallest multiple of $q^2+1$ that is at least $n$, satisfying the divisibility condition and set $b=n'/(q^2+1)$.

To simplify our presentation, we name different types of blocks from the symmetric tensor.
As we only consider the blocks in the lower tetrahedron of $\T{A}$, all block indices satisfy $i\geq j\geq k$.

\begin{itemize}
	\item \emph{off-diagonal block}: a block with index $(i, j, k)$ is called \emph{off-diagonal} if $i > j > k$.
    \item \emph{non-central diagonal block}: a block is called \emph{non-central diagonal} if exactly two of its indices are equal.
    	\item \emph{central diagonal block}: a block is called \emph{central diagonal} if all its indices are equal.
\end{itemize}

There are $(q^2+1)(q^2+2)(q^2+3)/6$ blocks in the lower tetrahedron of the symmetric tensor where $(q^2+1)q^2(q^2-1)/6$ blocks are off diagonal, $q^2(q^2+1)$ blocks are non-central diagonal and $q^2+1$ blocks are central diagonal.
The Steiner system partitions the off diagonal blocks, but we need to assign the diagonal blocks to processors in a way that is compatible with the assignment of off diagonal blocks, specifically in a way that their computations do not require any vector elements that are not already required by the processor.

Let $\{R_p\}$ be the set of subsets of $\{1,\ldots, q^2+1\}$ defined by the Steiner $(q^2+1,q+1,3)$ system, so $|R_p| = q+1$.
We let $N_p$ and $D_p$ represent a set of non-central diagonal blocks and a set of central diagonal blocks, respectively, assigned to processor $p$. 
\Cref{tab:TBP-10-30} shows the assignment of all lower tetrahedral blocks of a symmetric tensor from the Steiner $(10,4,3)$ system. 
We will work with this example throughout the text. 

We first discuss the data distribution of the off-diagonal blocks of the symmetric tensor, followed by the data distribution of both vectors, and then the data distribution of the non-central and central diagonal blocks of the symmetric tensor.

\begin{table}[t]
	\begin{tabular}{|c|c|c|c|c|}
		\hline
		$p$ & $R_p$ & $N_p$ & $D_p$ \\
		\hline
		1 & \{1,2,3,7\} & \{(2,2,1), (2,1,1), (7,2,2)\} & \{(1,1,1)\} \\
		2 & \{1,2,4,5\} & \{(4,4,1), (4,1,1), (5,1,1)\} & \{(2,2,2)\} \\
		3 & \{1,2,6,10\} & \{(6,6,1),(10,10,2), (6,1,1)\} & \{(6,6,6)\} \\
		4 & \{1,2,8,9\} & \{(8,8,1),(9,9,8), (8,1,1)\} & \{(8,8,8)\} \\
		5 & \{1,3,4,10\} & \{(10,10,1),(10,10,3), (10,1,1)\} & \{(3,3,3)\} \\
		6 & \{1,3,5,8\} & \{(3,3,1),(8,8,5), (3,1,1)\} & \{(5,5,5)\} \\
		7 & \{1,3,6,9\} & \{(9,9,1),(9,9,3), (9,1,1)\} & \{(9,9,9)\} \\
		8 & \{1,4,6,8\} & \{(6,6,4),(8,8,6), (6,4,4)\} & \{(4,4,4)\} \\
		9 & \{1,4,7,9\} & \{(7,7,1), (9,9,4), (7,1,1))\} & \{(7,7,7)\} \\
		10 & \{1,5,6,7\} & \{(5,5,1), (7,7,6), (7,6,6)\} & \{\} \\
		11 & \{1,5,9,10\} & \{(9,9,5),(10,10,9), (9,5,5)\} & \{(10,10,10)\}  \\
		12 & \{1,7,8,10\} & \{(8,8,7),(10,10,8), (10,8,8)\} & \{\} \\
		13 & \{2,3,4,8\} & \{(3,3,2), (3,2,2),(4,2,2)\} & \{\}  \\
		14 & \{2,3,5,6\} & \{(5,5,2), (5,2,2),(6,5,5)\} & \{\}  \\
		15 & \{2,3,9,10\} & \{(9,9,2), (9,2,2),(10,2,2)\} & \{\}  \\
		16 & \{2,4,6,9\} & \{(4,4,2), (9,9,6), (9,6,6)\} & \{\}  \\
		17 & \{2,4,7,10\} & \{(7,7,2),(10,10,4), (10,4,4)\} & \{\}  \\
		18 & \{2,5,7,9\} & \{(7,7,5),(9,9,7), (7,5,5)\} & \{\} \\
		19 & \{2,5,8,10\} & \{(8,8,2), (8,2,2),(10,5,5)\} & \{\}  \\
		20 & \{2,6,7,8\} & \{(6,6,2), (6,2,2),(8,6,6)\} & \{\}  \\
		21 & \{3,4,5,9\} & \{(4,4,3), (4,3,3),(9,4,4)\} & \{\}  \\
		22 & \{3,4,6,7\} & \{(6,6,3), (6,3,3),(7,3,3)\} & \{\}  \\
		23 & \{3,5,7,10\} & \{(5,5,3), (5,3,3),(10,3,3)\} & \{\}  \\
		24 & \{3,6,8,10\} & \{(8,8,3),(10,10,6), (8,3,3)\} & \{\}  \\
		25 & \{3,7,8,9\} & \{(7,7,3), (9,7,7),(9,3,3)\} & \{\}  \\
		26 & \{4,5,6,10\} & \{(5,5,4), (5,4,4),(10,10,5)\} & \{\}  \\
		27 & \{4,5,7,8\} & \{(7,7,4), (7,4,4),(8,7,7)\} & \{\}  \\
		28 & \{4,8,9,10\} & \{(8,8,4), (8,4,4),(10,9,9)\} & \{\}  \\
		29 & \{5,6,8,9\} & \{(6,6,5), (8,5,5) (9,8,8)\} & \{\} \\
		30 & \{6,7,9,10\} & \{(10,6,6),(10,10,7), (10,7,7)\} & \{\}  \\
		\hline
	\end{tabular}
	\caption{
		Processor sets of tetrahedral block partition for $m=10$ and $P=30$. 
		Processor $p$ owns all $TB_3(R_p) \cup N_p \cup D_p$ blocks of the symmetric tensor and performs all required computations with them.} 
	\label{tab:TBP-10-30}
\end{table}

\begin{table}[t]
	\begin{tabular}{|c|c|}
		\hline
		$i$ & $Q_i$\\
		\hline
		 1 & \{1,2,3,4,5,6,7,8,9,10,11,12\}\\
		 2 & \{1,2,3,4,13,14,15,16,17,18,19,20\}\\
		 3 & \{1,5,6,7,13,14,15,21,22,23,24,25\}\\
		 4 & \{2,5,8,9,13,16,17,21,22,26,27,28\}\\
		 5 & \{2,6,10,11,14,18,19,21,23,26,27,29\}\\
		6 & \{3,7,8,10,14,16,20,22,24,26,29,30\}\\
		 7 & \{1,9,10,12,17,18,20,22,23,25,27,30\}\\
		 8 & \{4,6,8,12,13,19,20,24,25,27,28,29\}\\
		 9 & \{4,7,9,11,15,16,18,21,25,28,29,30\}\\
		10 & \{3,5,11,12,15,17,19,23,24,26,28,30\}\\
		\hline
	\end{tabular}
	\caption{
		Row block sets of tetrahedral block partition for $m=10$ and $P=30$. 
		Row block $i$ of a vector is evenly distributed  among all processors of $Q_i$.
	} 
	\label{tab:PRS-10-30}
\end{table}

\subsubsection{Data distribution of off-diagonal blocks} 
\label{sec:assignment:offdiagonals}
The blocks corresponding to indices of $TB_3(R_p)$ are assigned to processor $p$. 
As $|R_p| = q+1$, processor $p$ has all $(q+1)q(q-1)/6$ off-diagonal blocks corresponding to $TB_3(R_p)$ at the beginning of the computation, and will perform all computations that require entries from these blocks. 
For example, $R_3= \{1,2,6,10\}$ in \Cref{tab:TBP-10-30} indicates that processor $3$ has blocks $(6,2,1), (10,2,1), (10,6,1)$ and $(10,6,2)$ of the symmetric tensor and will perform all computation involving these blocks.

\subsubsection{Data distribution of both vectors} 
We distribute row blocks of vectors $\V{x}$ and $\V{y}$ to be compatible with the data required by each processor. 
By~\Cref{lem:countSubsetsContainElement}, each row block of the vectors is required on $q(q+1)$ processors, so we distribute the row block equally on these processors. 
We assume that $b=\frac{n}{q^2+1}\geq q(q+1)$ to ensure that we can divide each row block across all the processors that require it.
\Cref{tab:PRS-10-30} shows which processors need a particular row block from each vector. 
For example, row block $3$ from both vectors $\V{x}$ and $\V{y}$ is required by processors $1,5,6,7,13,14,15,21,22,23,24$ and $25$. 
We use $Q_i$ to denote the set of processors that requires the $i$th block of each vector. 
We obtain $\{Q_i\}$ by analyzing the sets $R_p$. 

As each processor works with $q+1$ row blocks of each vector $\V{x}$ and $\V{y}$, each processor has $(q+1)\frac{b}{q(q+1)} = \frac{n}{P}$ elements of $\V{x}$ at the beginning of the computation and the same number of elements of $\V{y}$ at the end of the computation.

\subsubsection{Data distribution of non-central and central diagonal blocks}
\label{sec:assignment:diagonals}
We first discuss the distribution of $q^2(q^2+1)$ non-central diagonal blocks of the symmetric tensor.
We want to assign blocks to a processor in a compatible way with the off-diagonal blocks.
Specifically we want to assign the non-central diagonal blocks in such a way that performing their associated computations only requires data from $\V{x}$ and $\V{y}$ that is already required to perform the computations associated with the off-diagonal blocks assigned to the processor.
For example, processors $1,2,3$ and $4$ are the potential candidates for the block $(2,2,1)$ based on the sets $R_p$ defined in~\Cref{tab:TBP-10-30} as these processors already require the $1$st and $2$nd row blocks of both vectors. 
We assign $q^2(q^2+1)/P = q$ blocks to each processor.

Assigning disjoint sets of $q$ non-central diagonal blocks to each processor is equivalent to finding $q$ disjoint matchings in the bipartite graph  $G(X, Y, E)$ with $X =\{1,\ldots,P\}$ corresponding to processors, $Y = \{(a,a,b): q^2+1\geq a > b\geq 1\}\cup\{(a,b,b):q^2+1\geq a > b \geq 1\}$ corresponding to the non-central diagonal blocks, and $E = \{(x,y)\in X\times Y: y= (a,b,b) \text{ or } y=(a,a,b)\text{ and } a,b \in R_x\}$.
To show that such an assignment is possible, we use the following results.

\begin{theorem}[Hall's Marriage Theorem, {\cite[Theorem 2.1.2]{Diestel2017}}]
	\label{thm:hall:graphformulation}
	Let $G(X,Y,E)$ be a finite bipartite graph with bipartite sets $X$ and $Y$ and edge set $E$. There exists a matching with a set of disjoint edges that covers every vertex in $X$ iff $|W| \leq |N_G(W)|$ for every subset $W\subseteq X$. Here $N_G(W)$ denotes the set of all vertices in $Y$ that are adjacent to at least one vertex of $W$.
\end{theorem}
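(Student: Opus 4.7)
The plan is to prove Hall's Marriage Theorem by strong induction on $|X|$. The forward implication is immediate: if $M$ is a matching covering $X$, then restricting $M$ to any $W \subseteq X$ gives an injection $W \hookrightarrow N_G(W)$, so $|N_G(W)| \geq |W|$. The substantive direction is that Hall's condition implies the existence of a saturating matching, and this is where the induction does its work.

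The base case $|X|=1$ follows from Hall's condition applied to $W=X$, giving a neighbor to pair the single vertex with. For the inductive step with $|X| \geq 2$, I split into two cases according to whether Hall's condition is strict on every nonempty proper subset or is tight somewhere. In Case 1, every nonempty proper $W \subsetneq X$ satisfies $|N_G(W)| \geq |W|+1$. Pick any $x \in X$ and any $y \in N_G(x)$, and consider $G' = G - \{x,y\}$; for any $W \subseteq X \setminus \{x\}$ we have $|N_{G'}(W)| \geq |N_G(W)| - 1 \geq |W|$, so Hall's condition survives in $G'$, induction yields a matching on $X \setminus \{x\}$, and adding the edge $\{x,y\}$ completes the matching on $X$.

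In Case 2, some nonempty proper $W \subsetneq X$ satisfies $|N_G(W)| = |W|$. Apply induction to the subgraph on $(W, N_G(W))$ to obtain a matching $M_1$ saturating $W$; then apply induction to the subgraph $G''$ induced on $(X \setminus W,\, Y \setminus N_G(W))$ to obtain a matching $M_2$ saturating $X \setminus W$. The union $M_1 \cup M_2$ is the desired matching on $X$, with disjointness automatic because the two pieces live in disjoint vertex sets on both sides.

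The main obstacle is verifying that Hall's condition transfers to $G''$ in Case 2. For any $U \subseteq X \setminus W$, the neighbors of $U$ available in $G''$ are precisely $N_G(U) \setminus N_G(W)$. Applying Hall's condition in $G$ to the combined set $U \cup W$ yields $|N_G(U \cup W)| \geq |U|+|W|$, and since $N_G(U \cup W) = N_G(U) \cup N_G(W)$, inclusion–exclusion gives $|N_G(U) \setminus N_G(W)| \geq |U|+|W|-|N_G(W)| = |U|$. This is the only step that exploits the tightness $|N_G(W)|=|W|$ in an essential way, and it is the crux that makes the recursive decomposition valid; everything else is routine bookkeeping around the induction hypothesis.
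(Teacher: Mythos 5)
Your proof is correct: it is the classical induction argument (the Halmos--Vaughan style proof, essentially the one in the cited Diestel reference), with both the strict-inequality case and the tight-set decomposition handled properly, including the crucial verification that Hall's condition passes to the residual graph $G''$. The paper itself does not prove this theorem but imports it by citation, so your argument simply supplies the standard proof that the citation points to; nothing further is needed.
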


We obtain the following corollary by applying \Cref{thm:hall:graphformulation} on an extended graph.

\begin{corollary}
	\label{cor:hall:dgraphformulation}
	Let $G(X,Y,E)$ be a finite bipartite graph with bipartite sets $X$ and $Y$ and edge set $E$. There exist $d$ disjoint matchings if $d|W| \leq |N_G(W)|$ for every subset $W\subseteq X$. Here $N_G(W)$ denotes the set of all vertices in $Y$ that are adjacent to at least one vertex of $W$. Every matching consists of a set of disjoint edges that covers each vertex in $X$ and exactly $|X|$ unique vertices in $Y$. 
\end{corollary}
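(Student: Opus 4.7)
The plan is to reduce the corollary to a single application of Hall's Marriage Theorem (\Cref{thm:hall:graphformulation}) by constructing an auxiliary bipartite graph in which each vertex of $X$ is duplicated $d$ times. Concretely, I would define $G'(X', Y, E')$ with vertex set $X' = \{x^{(1)}, \ldots, x^{(d)} : x \in X\}$, so $|X'| = d|X|$, and edge set $E' = \{(x^{(i)}, y) : (x,y)\in E,\ 1 \le i \le d\}$. By construction each copy $x^{(i)}$ has the same neighborhood in $G'$ as $x$ does in $G$.

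The first step after the construction is to verify that $G'$ satisfies Hall's condition. Let $W' \subseteq X'$ be arbitrary and let $W = \{x \in X : x^{(i)} \in W' \text{ for some } i\}$ be its projection onto $X$. Since each $x \in W$ contributes at most $d$ copies to $W'$, we have $|W'| \le d|W|$; and since the copies share neighborhoods, $N_{G'}(W') = N_G(W)$. Combining with the hypothesis $d|W| \le |N_G(W)|$ gives $|W'| \le |N_{G'}(W')|$, so Hall's condition holds in $G'$.

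Applying \Cref{thm:hall:graphformulation} to $G'$ then yields a matching $M \subseteq E'$ covering every vertex of $X'$; in particular $|M| = d|X|$ and no two edges of $M$ share a $Y$-endpoint. I would then partition $M$ by the copy index: for $1 \le i \le d$, set $M_i = \{(x,y) \in E : (x^{(i)}, y) \in M\}$. Each $M_i$ contains exactly one edge incident to each $x \in X$, hence covers $X$ and uses exactly $|X|$ distinct vertices of $Y$. Because $M$ itself is a matching, the $Y$-endpoints used across the different $M_i$ are pairwise distinct, so each $M_i$ is a matching in $G$ and the collection $M_1,\ldots,M_d$ is pairwise edge-disjoint, as required.

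Nothing in the argument is genuinely hard; the only items to check are the identity $N_{G'}(W') = N_G(W)$, which is immediate from how edges were copied, and that slicing $M$ by copy index produces matchings of $G$, which follows because the $Y$-endpoints of $M$ are all distinct. The construction is essentially the standard vertex-cloning reduction used to derive Hall-type results for collections of matchings, so the main work is the bookkeeping of the auxiliary graph rather than any new combinatorial idea.
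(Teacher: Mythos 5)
Your proof is correct and follows essentially the same route as the paper: duplicate each vertex of $X$ into $d$ copies, verify Hall's condition for the enlarged graph, apply \Cref{thm:hall:graphformulation}, and split the resulting matching by copy index into $d$ disjoint matchings. Your write-up is in fact slightly more careful than the paper's in verifying Hall's condition on the auxiliary graph and in noting that the $Y$-endpoints across the $d$ matchings are pairwise distinct.
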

\begin{proof}
	We transform $G$ into a new bipartite graph $G_d$ by replacing each vertex in $X$ with $d$ copies of itself, and connect each copy as in the original graph. We apply Hall's theorem to the transformed graph. If $|W| \leq d|N_G(W)|$ for every subset $W\subseteq X$, then there exists a matching in $G_d$ by~\Cref{thm:hall:graphformulation}. It implies that a matching exists for each of the $d$ copies. These matchings correspond to $d$ disjoint matchings in $G$.  
\end{proof}

From the construction of our bipartite graph, we know that each element of $X$ has $\frac{(q+1)q}{2}\cdot 2 = q(q+1)$ edges because $|R_p|=q+1$ for all $p$. 
Each element of $Y$ has $(q^2-1)/(q-1) = (q+1)$ edges because a pair of indices occurs in exactly $(q^2-1)/(q-1)$ sets $R_p$ by \Cref{lem:countSubsetsContainPair}.  
Thus for any $W\subseteq X$, $|N_G(W)| \geq \frac{q(q+1)}{(q+1)}|W|=q|W|$. By~\Cref{cor:hall:dgraphformulation}, there exist $q$ disjoint matchings. This is equivalent to assigning $q$ non-central diagonal blocks to each processor.

To obtain $q$ disjoint matchings, we can use maximum cardinality matching algorithms, such as the Ford–Fulkerson algorithm~\cite{Ford87} and the Hopcroft-Karp algorithm~\cite{HK73}, on the transformed graph.

\paragraph{Data distribution of central diagonal blocks} 
There are $q^2+1 < P$ central diagonal blocks, so we want to assign 1 central diagonal block to some processors in such a way that no additional data from the vectors $\V{x} $ and $\V{y}$ is required for the processor to perform the computations associated with the diagonal block.
To demonstrate that there exists such an assignment, we again apply Hall's Marriage Theorem this time to the graph $G(X,Y,E)$ where $X=\{1,\ldots,q^2+1\}$, $Y=\{1,\ldots,P\}$ and $e=\{(x,y):x\in R_y\}$.
As $|R_p|=q+1$ for each $p$, therefore each element of $Y$ has $q+1$ edges. Each index appears in $q(q+1)$ sets $R_p$ by~\Cref{lem:countSubsetsContainElement}. Therefore, each element of $X$ has $q(q+1)$ edges. Thus for any $W\subseteq X$, $|N_G(W)| \geq \frac{q(q+1)}{(q+1)}|W|=q|W|$. By~\cref{thm:hall:graphformulation}, a matching can be found.
Again to find the matching we can use a maximum cardinality matching algorithm.

Recall that the block size is $b=n/(q^2+1)$ and $P=q(q^2+1)$. The numbers of lower tetrahedral elements in an off-diagonal block, a non-central diagonal block and a central diagonal block are $b^3$, $\frac{b^2(b+1)}{2}$ and $\frac{b(b+1)(b+2)}{6}$, respectively.
As each processor owns $(q+1)q(q-1)/6$ off-diagonal blocks, $q$ non-central diagonal blocks, and up to $1$ central diagonal block of the symmetric tensor, the processor stores at most
$$\frac{(q+1)q(q-1)}{6}b^3 + q\frac{b^2(b+1)}{2} + \frac{b(b+1)(b+2)}{6} \approx \frac{n^3}{6P}$$ 
elements of the tensor.

\section{Algorithm}
\label{sec:algo:3d}

Now we present our algorithm that attains the lower bound of \Cref{thm:memindeplb} matching the leading order constant exactly. It is outlined in \Cref{alg:sttvv}.
{\footnotesize
\begin{algorithm}
	\caption{Parallel \sttsv computation}
	\label{alg:sttvv}
	\begin{algorithmic}[1]
		\Require $q$ is a prime power
		\Require $b = n/(q^2+1)$ is the block size and $m=n/b$.
		\Require $\Pi$ is a set of processors with $|\Pi|=P=q(q^2+1)$.
		\Require $\T{A}[{T}_{p}] = \{ \T{A}[i][j][k]: i>j>k \in R_p \} \cup \{\T{A}[i][i][k]: (i,i,k)\in N_p\} \cup \{\T{A}[i][k][k]: (i,k,k)\in N_p\} \cup \{\T{A}[i][i][i]: (i,i,i)\in D_p\}$ is extended tetrahedral block owned by processor $p$.
		\Require $\V{x}$ is evenly divided into $m$ row blocks, and each row block $\V{x}[i]$ is evenly divided across a set of $q(q+1)$ processors $Q_i$ with $\V{x}[i]^{(p)}$ owned by processor $p$ so that $\V{x}[R_p]^{(p)} = \{\V{x}[i]^{(p)}: i \in R_p\}$ is a set of data blocks of $\V{x}$ owned by processor $p$. Similarly, $\V{y}[R_p]^{(p)}$ is defined and it is owned by processor  $p$.
		\Ensure $\V{y} = \T{A}\times_2 \V{x} \times_3 \V{x}$ with $\V{y}[R_p]^{(p)}$ owned by processor $p$.
        \Function{$\V{y}[R_p]^{(p)} = \text{\sttsv}$}{$\T{A}[{T}_{p}], \V{x}[R_p]^{(p)}, \Pi$}
		\State $p = \Call{MyRank}{\Pi}$			
		\Statex\Comment{Store its own data of a row in $q(q+1) -1 $ blocks}
		\State Allocate array $\bar{\V{x}}$ of $2P$ blocks, each of size $\frac{b}{q(q+1)}$ \label{line:sttvv:startcommx}
		\ForAll{$i \in R_p$}
		\ForAll{$p' \in Q_i \backslash \{p\}$}
		\State $\bar{\V{x}}[p'] = \V{x}[i]^{(p)}$
		\EndFor
		\EndFor
		\State $\bar{\V{x}}$ = \Call{All-to-All}{$\bar{\V{x}},\Pi$}\label{line:sttvv:AlltoAllx}
		\ForAll{$i \in R_p$}
		\ForAll{$p' \in Q_i \backslash \{p\}$}
		\State Store $\bar{\V{x}}[p']$ into $\V{x}[i]$
		\EndFor
		\EndFor \label{line:sttvv:stopcommx}
        \Statex\Comment{Perform local computations (ternary multiplications)}
		\State Initialize portions of row blocks of $\V{y}$ own by $p$ to $0$
		\ForAll{$\T{A}[i][j][k] \in \T{A}[{T}_{k}]$} \label{line:sttvv:startcomp}
		\If{$i\neq j \neq k$}
		\State $\V{y}[i] = 2\T{A}[i][j][k] \times_2 \V{x}[j] \times_3 \V{x}[k]$
		\State $\V{y}[j] = 2\T{A}[i][j][k] \times_1 \V{x}[i] \times_3 \V{x}[k]$ 
		\State $\V{y}[k] = 2\T{A}[i][j][k] \times_1 \V{x}[i] \times_2 \V{x}[j]$  
		\ElsIf{$i==j$ and $j\neq k$}
		\State $\V{y}[i] = 2\T{A}[i][i][k] \times_2 \V{x}[i] \times_3 \V{x}[k]$; $\ \V{y}[k] = \T{A}[i][i][k] \times_1 \V{x}[i] \times_2 \V{x}[i]$
		\ElsIf{$i\neq j$ and $j == k$}
		\State $\V{y}[i] = \T{A}[i][k][k] \times_2 \V{x}[k] \times_3 \V{x}[k]$; $\ \V{y}[k] = 2\T{A}[i][k][k] \times_1 \V{x}[i] \times_2 \V{x}[k]$
		\Else
		\State $\V{y}[i] = \T{A}[i][i][i] \times_2 \V{x}[i] \times_3 \V{x}[i]$
		\EndIf
		\EndFor \label{line:sttvv:stopcomp}
		\Statex\Comment{Distribute and reduce results for each row block set}
		\State Allocate array $\bar{\V{y}}$ of $2P$ blocks, each of size $\frac{b}{q(q+1)}$\label{line:sttvv:startcommy}
		\ForAll{$i \in R_p$}
		\ForAll{$p' \in Q_i \backslash \{p\}$}
		\State $\bar{\V{y}}[p'] = \V{y}[i]^{(p')}$
		\EndFor
		\EndFor
		\State $\bar{\V{y}}$ = \Call{All-to-All}{$\bar{\V{y}},\Pi$}\label{line:sttvv:AlltoAlly}
		\ForAll{$i \in R_p$}
		\State $\V{y}[i]^{(p)} \pluseq \tilde{\V{y}}[i]^{(p)}$
		\ForAll{$p' \in Q_i \backslash \{p\}$}
		\State $\V{y}[i]^{(p)} \pluseq \bar{\V{y}}[p']$\label{line:sttvv:reduceyi}
		\EndFor
		\EndFor\label{line:sttvv:stopcommy}
		\EndFunction
	\end{algorithmic}
\end{algorithm}
}

We use $\Pi$ to denote the set of all processors where $|\Pi|=P=q(q^2+1)$ for a prime power $q$.
We use $\V{x}[i]$ to denote the $i$th row block of $\V{x}$, and $\T{A}[i][j][k]$ to denote the block of $\T{A}$ at index $(i,j,k)$.
To perform the local computation, each processor $p$ must gather $q+1$ row blocks corresponding to the set $R_p$ for the vector $\V{x}$. 
The processor owns only the $\V{x}[i]^{(p)}$ portion of the $\V{x}[i]$ row block for each $i\in R_p$, therefore it must communicate with the other processors in $Q_i$ to gather the full $\V{x}[i]$ block. 
As any processor may communicate to at most $P-1$ processors, we expressed our communication pattern with a \Call{All-to-All}{} collective which is presented in \cref{line:sttvv:startcommx} to \cref{line:sttvv:stopcommx}. 
Most of the pseudocode involves packing and unpacking of a temporary array $\bar{\V{x}}$ to organize the data for communication. 
Each processor performs its share of local computations (ternary multiplications) in  \cref{line:sttvv:startcomp} to \cref{line:sttvv:stopcomp} to compute its partial contributions to $\V{y}$. 
Then, each processor $p$ must share the partially computed $\V{y}[i]$ with the other processors in $Q_i$ to obtain the final $\V{y}[i]^{(p)}$ portion of $\V{y}[i]$ for each $i\in R_p$. 
This communication pattern is also expressed with an \Call{All-to-All}{} collective, which is presented in \cref{line:sttvv:startcommy} to \cref{line:sttvv:stopcommy}. 
Similar to the communication of $\V{x}$, a temporary array $\bar{\V{y}}$ is employed to pack and unpack the data for communication. 
The processor $p$ sums all partial results in \cref{line:sttvv:reduceyi} to obtain the final $\V{y}[i]^{(p)}$. 

\subsection{Computational Cost Analysis}
\label{sec:algo:compcost}

We express the computational cost of the algorithm in terms of number of ternary multiplications performed. 
Recall that the dimension of each row block of a vector is $b=n/(q^2+1)$, and the dimensions of each block of the symmetric tensor are $b \times b\times b$. 
We analyze the amount of computations performed for each type of block of the symmetric tensor. 
There are $b^3$ strict lower tetrahedral elements of $\T{A}$ in an off-diagonal block, therefore the number of ternary multiplications performed by the algorithm for each block is $3b^3$. 
There are $b^2(b+1)/2$ lower tetrahedral elements of $\T{A}$ in a non-central diagonal block, of which $b^2(b-1)/2$ are strictly lower tetrahedral elements and $b^2$ are non-central diagonal elements. 
Therefore, the number of ternary multiplications performed by the algorithm for each non-central diagonal block is  $3b^2(b-1)/2 + 2b^2$. 
A central diagonal block contains $b(b+1)(b+2)/6$ lower tetrahedral elements of $\T{A}$, among which $b(b-1)(b-2)/6$ are strictly lower tetrahedral elements, $b(b-1)$ are non-central diagonal elements and $b$ are central diagonal elements. 
Therefore, the number of ternary multiplications performed by the algorithm for each central diagonal block is $3b(b-1)(b-2)/6 + 2b(b-1) + b$.

All ternary multiplications occur between \cref{line:sttvv:startcomp} and \cref{line:sttvv:stopcomp} in \Cref{alg:sttvv}. 
For every $p$, we have $|R_p|=q+1$, $|N_p| = q$ and $|D_p| \in \{0,1\}$. 
Thus the computational cost of the algorithm, in terms of ternary multiplications performed, is at most 
$$\frac{(q+1)q(q-1)}{6}\cdot 3b^3 + q \cdot 3b^2(b-1) + \frac{3b(b-1)(b-2)}{6}\text .$$
As $q(q^2+1)=P$ and $b=n/(q^2+1)$, the leading order terms of the computational cost is
$$ \frac{n^3}{2q^3} + \frac{3n^3}{q^5} + \frac{n^3}{2q^6} = \frac{n^3}{2P} + o\left(\frac{n^3}{P}\right)\text.$$

Note that we may not have perfect computational load balance, as not all processors are assigned a central diagonal block. 
However the imbalance does not affect the constant in the leading order term.

\subsection{Communication Cost Analysis}
\label{sec:algo:commcost}

\subsubsection{New Results on Communication Costs}
\label{sec:algo:stepscomm}
Here we present results that we will use while analyzing communication cost of our parallel algorithm in \Cref{sec:algo:3d}. 
One can easily obtain the following lemma by applying \Cref{thm:hall:graphformulation} on a bipartite graph where each vertex in $X$ and $Y$ is repeated $d$ times.

\begin{lemma}
	\label{lem:balanced:bipartitegraph}
	Let $G(X,Y)$ be a finite bipartite graph with bipartite sets $X$ and $Y$, where $|X|=|Y|$, each vertex in $X$ is connected to exactly $d$ vertices in $Y$, and vice versa. 
	There exist $d$ disjoint matchings in $G$, and each of them covers every vertex in $X$ and $Y$.
\end{lemma}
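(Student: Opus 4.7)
The plan is to prove this by induction on $d$, successively peeling off one perfect matching at a time. The base case $d = 0$ is vacuous, so suppose $d \geq 1$. The main step will be to show that any $d$-regular bipartite graph with $|X| = |Y|$ contains a perfect matching; applying the induction hypothesis to the residual graph $G \setminus M$ (which will be $(d-1)$-regular on the same vertex sets) then finishes the job.

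To produce the single perfect matching, I would invoke \Cref{thm:hall:graphformulation}. For any subset $W \subseteq X$, double count the edges with one endpoint in $W$: there are exactly $d|W|$ of them, and they all land in $N_G(W)$. Since each vertex in $Y$ has degree $d$, we also have at most $d|N_G(W)|$ such edges, so $d|W| \leq d|N_G(W)|$, giving $|N_G(W)| \geq |W|$. Hall's condition therefore holds and produces a matching $M$ that saturates $X$. Because $|X| = |Y|$, a matching saturating $X$ automatically covers $Y$ as well, so $M$ is a perfect matching.

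Removing the $|X|$ edges of $M$ decreases the degree of every vertex by exactly one, so $G \setminus M$ is $(d-1)$-regular on the same bipartite sets with $|X| = |Y|$. The induction hypothesis yields $d-1$ disjoint perfect matchings of $G \setminus M$; adjoining $M$ gives the required $d$ disjoint perfect matchings, each covering every vertex of $G$.

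There is no substantive obstacle here: this is essentially the classical decomposition of a regular bipartite graph into perfect matchings (equivalent to K\"onig's edge-coloring theorem). The only mildly delicate bookkeeping is noting that the Hall verification succeeds because degree regularity on \emph{both} sides is what turns the edge count $d|W|$ into the bound $|N_G(W)| \geq |W|$, and that the equality $|X| = |Y|$ is what promotes a one-sided saturation to a perfect matching. The alternative route hinted at in the paper, duplicating each vertex $d$ times and applying Hall's theorem once to the enlarged graph, would give the same conclusion via a single appeal to \Cref{thm:hall:graphformulation}, but the iterative version above is cleaner and reuses the existing machinery directly.
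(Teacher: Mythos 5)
Your proof is correct, and it takes a different route from the paper's. The paper does not spell out a proof at all: it simply remarks that the lemma follows by applying \Cref{thm:hall:graphformulation} once to a graph in which every vertex of $X$ \emph{and} of $Y$ is duplicated $d$ times, mirroring the construction used for \Cref{cor:hall:dgraphformulation}. You instead give the classical inductive argument (essentially K\"onig's edge-coloring theorem for regular bipartite graphs): verify Hall's condition by double counting, which uses $d$-regularity on both sides to turn the edge count $d|W|$ into $|N_G(W)|\geq|W|$; extract one matching saturating $X$, note that $|X|=|Y|$ makes it perfect; observe that deleting it leaves a $(d-1)$-regular graph on the same vertex sets; and recurse. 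Your version costs $d$ invocations of Hall instead of one, but it buys transparency on exactly the point the paper's one-line remark glosses over: in the vertex-duplicated graph, a perfect matching could in principle match two copies of $x$ to two copies of the same neighbor $y$, and one must argue that the matching can be chosen (or post-processed) so that the $d$ matchings it induces in $G$ are pairwise edge-disjoint; your peeling argument makes edge-disjointness and the covering property immediate at every step, since each extracted matching is removed from the graph before the next is sought. The base case and the residual-regularity bookkeeping are handled correctly, so there is no gap.
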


We can use maximum cardinality matching algorithms, such as the Ford–Fulkerson algorithm~\cite{Ford87} and the Hopcroft-Karp algorithm~\cite{HK73}, on $G$ to obtain $d$ such matchings.

We will use \Cref{lem:balanced:bipartitegraph} to bound the number of words sent by any processor in a set of processors, where each processor communicates with the same number of processors. 
We remind the reader that, in our model, a processor can send and receive at most one message at the same time.

\begin{theorem}
	\label{thm:proc:boundWords}
	Consider a set of $P$ processors. 
	Every processor needs to send and receive one message of $W$ words to and from each processor in a set of $d\; (d\leq P-1)$ processors. 
	Then, all required communication can be performed in $d$ steps, with the maximum number of words sent or received by any processor being $dW$. 
\end{theorem}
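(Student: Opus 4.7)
The plan is to reduce the scheduling problem to the existence of $d$ perfect matchings in an appropriate regular bipartite graph and then invoke \Cref{lem:balanced:bipartitegraph}. Define a bipartite graph $G(X,Y)$ where both $X$ and $Y$ are copies of the set of $P$ processors; $X$ represents senders and $Y$ represents receivers. Place an edge between $x \in X$ and $y \in Y$ whenever processor $x$ must send a message of $W$ words to processor $y$. By the hypothesis that every processor sends to exactly $d$ other processors and receives from exactly $d$ other processors, every vertex in $X$ and every vertex in $Y$ has degree exactly $d$, and $|X|=|Y|=P$.

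Next I would apply \Cref{lem:balanced:bipartitegraph} to obtain $d$ edge-disjoint matchings $M_1,\ldots,M_d$ of $G$, each covering every vertex in $X$ and every vertex in $Y$. I would then interpret matching $M_\ell$ as the schedule for communication step $\ell$: for each edge $(x,y) \in M_\ell$, processor $x$ sends its $W$-word message to processor $y$ during step $\ell$. Because $M_\ell$ covers every vertex of $X$ exactly once, each processor sends exactly one message in step $\ell$; because $M_\ell$ also covers every vertex of $Y$ exactly once, each processor receives exactly one message in step $\ell$. This respects the model's restriction that a processor can send and receive at most one message simultaneously.

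Finally, since the $d$ matchings are edge-disjoint and their union is all of $E(G)$, each required send is executed in exactly one step, so the schedule completes all $dP$ required transfers in $d$ steps. In each step, a processor sends and receives one message of $W$ words, so after $d$ steps the total number of words sent by any processor is $dW$ and likewise for the total received, which yields the stated bound.

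The only subtle point is verifying that \Cref{lem:balanced:bipartitegraph} truly applies, which just needs the two-sided $d$-regularity of $G$; this follows from the assumption that the send and receive partners are symmetric in count. The proof contains no real obstacle beyond correctly setting up the bipartite graph so that sending and receiving constraints are both captured by the degree condition.
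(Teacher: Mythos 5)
Your proposal is correct and follows essentially the same route as the paper's own proof: build the $d$-regular bipartite sender/receiver graph on the $P$ processors, apply \Cref{lem:balanced:bipartitegraph} to extract $d$ disjoint perfect matchings, and schedule one matching per step so each processor sends and receives exactly one $W$-word message per step, giving $dW$ words in $d$ steps. Your added remark that the edge-disjoint matchings exhaust all edges of the graph is a slight explicitation of what the paper leaves implicit, but it is not a different argument.
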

\begin{proof}
	Consider the bipartite graph $G(X,Y)$ with $X=Y=\{1,\ldots,P\}$ corresponding to processors. 
	If a processor $i$ needs to send a message to processor $j$, then there is an edge from $i\in X$ to $j\in Y$. 
	From the construction, each vertex in $X$ is connected to exactly $d$ vertices in $Y$, and vice versa. 
	By~\Cref{lem:balanced:bipartitegraph}, there exist $d$ matchings and each of them covers all the vertices of $X$ and $Y$. 
	We can see that each matching corresponds to a communication step that describes how processors can communicate. 
	Every processor sends and receives $W$ words in each step. 
	As there are $d$ different matchings, every processor can perform all required communication in $d$ steps and the total number of words sent or received by the processor is $dW$.	
\end{proof}

\subsubsection{Communication Cost of \Cref{alg:sttvv}}
The communication is performed in \cref{line:sttvv:AlltoAllx} to collect all required row blocks of the input vector $\V{x}$ and in \cref{line:sttvv:AlltoAlly} to distribute the partially computed row blocks of the output vector $\V{y}$. 

Each row block of a vector is shared evenly among $q(q+1)$ processors. 
Therefore, each participating processor must send $\frac{b}{q(q+1)}$ words of the block to all other participating processors and receive the same number of words from each of them. 
For every $p$, $|R_p|=q+1$, therefore each processor owns portions of $q+1$ row blocks of a vector. 
Thus, each processor must send  $(q+1)  \frac{b}{q(q+1)}\left(q(q+1) -1\right) = n\frac{q+1}{q^2+1} -\frac{n}{P}$ words and receive the same number of words, in total, for each vector.

In~\Cref{tab:TBP-10-30}, we can observe that processor $1$ does not share any data with processor $26$. 
This indicates that each processor does not need to communicate to all other processors. 
We expressed communication in our algorithm using \Call{All-to-All}{} collectives to simplify our presentation. 
However, it can be accomplished by directly communicating between source and destination. 
We will show that it is possible to perform only the required amount of data transfers for each vector in $(q+1)\left(q(q+1) -1\right) - q\frac{q(q+1)}{2}= \frac{q^3}{2} + \frac{3q^2}{2} -1$ steps.

As each processor owns row blocks of a vector defined by a tetrahedral block partition, any pair of processors can share at most $2$ row blocks. 
Any pair of row blocks is present in $q+1$ sets $R_p$ by \Cref{lem:countSubsetsContainPair}, each participating processor will communicate data its data from $2$ row blocks with other participating processors. 
There are $q(q+1)/2$ possible pairs for a $R_p$ set, therefore each processor will communicate $2$ row blocks with $\frac{q(q+1)}{2}(q+1-1) =\frac{q^2(q+1)}{2}$ processors. 
By~\Cref{thm:proc:boundWords}, all required communication can be performed in $\frac{q^2(q+1)}{2}$ steps, where each processor sends and receives $\frac{q^2(q+1)}{2} \cdot \frac{2b}{q(q+1)}$ words. 
The distribution presented in \Cref{tab:TBP-10-30} shows that each processor will communicate $2$ blocks with $18$ processor for each vector.

Now we determine the number of processors with which a processor will communicate only one row block of each vector.
Our algorithm requires each processor communicates $(q(q+1)-1)(q+1)$ row blocks, where $2$ blocks must be sent to each of $\frac{q^2(q+1)}{2}$ processors. 
Therefore, each processor will communicate only one row block with  $q^2-1$ processors. By~\Cref{thm:proc:boundWords}, all required communications can be performed in $q^2-1$ steps, where each processor sends and receives $(q^2-1) \cdot \frac{b}{q(q+1)}$ words.
In our working example processor 1 communicates only one row block with processors $8, 11, 16, 19, 21, 24, 27$ and $30$.

Thus each processor sends and receives $q^2(q+1) \cdot \frac{b}{q(q+1)} + (q^2-1) \cdot \frac{b}{q(q+1)} = n\frac{q+1}{q^2+1} -\frac{n}{P}$ words in $\frac{q^2(q+1)}{2}+q^2-1 = \frac{q^3}{2} + \frac{3q^2}{2} -1$ steps. We show a sequence of communication steps for a relatively small example in \Cref{sec:anotherExample}.

As communication is performed for both vectors, the bandwidth cost of the algorithm is
$$2\left(n\frac{q+1}{q^2+1} -\frac{n}{P}\right)\text.$$

As $P=q(q^2+1)$, which implies that $(q^2+1)/(q+1) \approx P^{1/3}$, the bandwidth cost is dominated by $ 2(n/P^{1/3} - n/P)$. Hence the bandwidth cost of the algorithm matches the leading term in the lower bound exactly.

\paragraph{Communication cost of our algorithm with All-to-All collectives}
Here we discuss bandwidth cost of our algorithm when \Call{All-to-All}{} collectives are used to communicate both vectors $\V{x}$ and $\V{y}$. An \Call{All-to-All}{} collective, using an algorithm that achieves optimal bandwidth cost, takes $P-1$ steps~\cite{Thakur:CollectiveCommunications:2005}.
For each vector, at each step, a processor may send its own data of $2$ row blocks to another processor. 
Therefore, the bandwidth cost of each step is $2\frac{b}{q(q+1)}$, and the overall bandwidth cost of an \Call{All-to-All}{} collective is $2\frac{b}{q(q+1)} (P-1) = \frac{2n}{q+1} (1-1/P)$. 
As communication is performed for both vectors, the bandwidth cost of the algorithm using \Call{All-to-All}{} collectives is $\frac{4n}{q+1} (1-1/P)$. 
Given that $P=q(q^2+1)$, which implies $q\approx P^{1/3}$, the bandwidth cost is dominated by $4n/P^{1/3}$, which is twice the leading term in the lower bound.

\section{Conclusion}
\label{sec:conclusion}
In this work, we establish communication lower bounds for the parallel \sttsv computation. The lower bound on accessed data is formulated as the solution of a constrained optimization problem. One of the constraints is obtained by manipulating the symmetric iteration space and deriving a relation between the number of iteration points and the required data access within vectors. We also show that our bounds are tight by presenting a communication optimal algorithm. We extend the triangle block partition scheme proposed by Beaumont et al.~\cite{BELV22} and Al Daas et al.~\cite{ABGKR23} for the lower triangle of symmetric matrices to a tetrahedral partition of the lower tetrahedron of 3-dimensional symmetric tensors. Our algorithm works with tetrahedral blocks that reduces data accesses within vectors. We use Steiner systems, which provide collections of equally sized subsets of indices where any three distinct indices appear in exactly one subset, to generate the tetrahedral partitions.

We focus on \Cref{alg:sttvv:symmetry} that performs $n^2(n+1)/2$ ternary multiplications. Each \arraymul requires $2$ elementary multiplications. Furthermore, an addition and often a multiplication are involved with each \arraymul. Therefore, the algorithm needs to perform a total of approximately $2n^3$ elementary arithmetic operations.  In our approach, we assume that each \arraymul is computed on a single processor, i.e, partial ternary multiplications are not reused across processors. However it is reasonable for an algorithm to break this assumption to improve arithmetic costs by reusing partial results across processors. For example, \sttsv can also be performed in sequence, first $\T{A}\times_2x$ is computed with parallel matrix multiplication and then the result is multiplied with $x$ in parallel to obtain the final output. This approach requires performing a total of $2n^3 + 2n^2$ elementary arithmetic operations. However, by leveraging symmetry, the total number of operations can be reduced by almost half. When $P$ is small ($P\leq n)$, bandwidth cost of the first step is at least $\mathcal{O}(n)$~\cite{ABGKR22}, which is asymptotically larger than that of our algorithm. We plan to study the sequence approach for parallel \sttsv computation in detail in the future.

We intend to generalize our results for $d$-dimensional computations. The lower bound arguments can easily be extended for $d$-dimensional \sttsv computations. However, there are no known infinite families of Steiner systems for $s>3$, so constructing partitions for higher dimensions is more challenging.

In our computational model, we assume that  each processor has sufficiently large local memory. Therefore, our algorithm may not be feasible in limited-memory scenarios. We plan to explore parallel algorithms that are communication optimal for such scenarios. 

Matricized tensor times Khatri-Rao product (MTTKRP) is a bottleneck operation for algorithms to compute a Canonical Polyadic tensor decomposition. Mode-1 MTTKRP for a $3$-dimensional symmetric tensor $\T{A}$ can be computed element-wise as 
$$Y_{i\ell} = \sum_{j,k}\T{A}_{ijk}\cdot X_{j\ell} \cdot X_{k\ell}\text.$$ 

Here $X$ is the factor matrix associated with the decomposition. For a fixed $\ell$, the above expression represents a \sttsv computation.
We plan to generalize our lower bound and algorithm to MTTKRP in the future.

\begin{acks}
This project has received funding through the UKRI Digital Research Infrastructure Programme through the Science and Technology Facilities Council’s
Computational Science Centre for Research Communities (CoSeC).
    This project received funding from the European Research Council (ERC) under European Union's Horizon 2020 research and innovation program (grant agreement 810367).  
    This work is supported by the National Science Foundation under grant CCF-1942892. This material is based upon work supported by the US Department of Energy, Office of Science, Advanced Scientific Computing Research program under award DE-SC-0023296.
\end{acks}

\bibliographystyle{ACM-Reference-Format}
\bibliography{refs}


\begin{thebibliography}{31}


\ifx \showCODEN    \undefined \def \showCODEN     #1{\unskip}     \fi
\ifx \showISBNx    \undefined \def \showISBNx     #1{\unskip}     \fi
\ifx \showISBNxiii \undefined \def \showISBNxiii  #1{\unskip}     \fi
\ifx \showISSN     \undefined \def \showISSN      #1{\unskip}     \fi
\ifx \showLCCN     \undefined \def \showLCCN      #1{\unskip}     \fi
\ifx \shownote     \undefined \def \shownote      #1{#1}          \fi
\ifx \showarticletitle \undefined \def \showarticletitle #1{#1}   \fi
\ifx \showURL      \undefined \def \showURL       {\relax}        \fi
\providecommand\bibfield[2]{#2}
\providecommand\bibinfo[2]{#2}
\providecommand\natexlab[1]{#1}
\providecommand\showeprint[2][]{arXiv:#2}

\bibitem[Aggarwal et~al\mbox{.}(1990)]%
        {ACS90}
\bibfield{author}{\bibinfo{person}{A. Aggarwal}, \bibinfo{person}{A.~K.
  Chandra}, {and} \bibinfo{person}{M. Snir}.} \bibinfo{year}{1990}\natexlab{}.
\newblock \showarticletitle{Communication Complexity of {PRAM}s}.
\newblock \bibinfo{journal}{\emph{Theor. Comp. Sci.}} \bibinfo{volume}{71},
  \bibinfo{number}{1} (\bibinfo{year}{1990}).
\newblock
\showISSN{0304-3975}
\href{https://doi.org/10.1016/0304-3975(90)90188-N}{doi:\nolinkurl{10.1016/0304-3975(90)90188-N}}


\bibitem[Agullo et~al\mbox{.}(2023)]%
        {ABC+23}
\bibfield{author}{\bibinfo{person}{E. Agullo}, \bibinfo{person}{A. Buttari},
  \bibinfo{person}{O. Coulaud}, \bibinfo{person}{L. Eyraud-Dubois},
  \bibinfo{person}{M. Faverge}, \bibinfo{person}{A. Franc}, \bibinfo{person}{A.
  Guermouche}, \bibinfo{person}{A. Jego}, \bibinfo{person}{R. Peressoni}, {and}
  \bibinfo{person}{F. Pruvost}.} \bibinfo{year}{2023}\natexlab{}.
\newblock \showarticletitle{On the Arithmetic Intensity of Distributed-Memory
  Dense Matrix Multiplication Involving a Symmetric Input Matrix (SYMM)}. In
  \bibinfo{booktitle}{\emph{2023 IEEE International Parallel and Distributed
  Processing Symposium (IPDPS)}}. IEEE, \bibinfo{pages}{357--367}.
\newblock
\href{https://doi.org/10.1109/IPDPS54959.2023.00044}{doi:\nolinkurl{10.1109/IPDPS54959.2023.00044}}


\bibitem[Al~Daas et~al\mbox{.}(2022)]%
        {ABGKR22}
\bibfield{author}{\bibinfo{person}{H. Al~Daas}, \bibinfo{person}{G. Ballard},
  \bibinfo{person}{L. Grigori}, \bibinfo{person}{S. Kumar}, {and}
  \bibinfo{person}{K. Rouse}.} \bibinfo{year}{2022}\natexlab{}.
\newblock \showarticletitle{Brief Announcement: Tight Memory-Independent
  Parallel Matrix Multiplication Communication Lower Bounds}. In
  \bibinfo{booktitle}{\emph{Proceedings of the 34th ACM Symposium on
  Parallelism in Algorithms and Architectures}} (Philadelphia, PA, USA)
  \emph{(\bibinfo{series}{SPAA '22})}. \bibinfo{publisher}{Association for
  Computing Machinery}, \bibinfo{address}{New York, NY, USA},
  \bibinfo{pages}{445–448}.
\newblock
\showISBNx{9781450391467}
\href{https://doi.org/10.1145/3490148.3538552}{doi:\nolinkurl{10.1145/3490148.3538552}}


\bibitem[Al~Daas et~al\mbox{.}(2023)]%
        {ABGKR23}
\bibfield{author}{\bibinfo{person}{H. Al~Daas}, \bibinfo{person}{G. Ballard},
  \bibinfo{person}{L. Grigori}, \bibinfo{person}{S. Kumar}, {and}
  \bibinfo{person}{K. Rouse}.} \bibinfo{year}{2023}\natexlab{}.
\newblock \showarticletitle{Parallel Memory-Independent Communication Bounds
  for SYRK}. In \bibinfo{booktitle}{\emph{Proceedings of the 35th ACM Symposium
  on Parallelism in Algorithms and Architectures}} (Orlando, FL, USA)
  \emph{(\bibinfo{series}{SPAA '23})}. \bibinfo{publisher}{Association for
  Computing Machinery}, \bibinfo{address}{New York, NY, USA},
  \bibinfo{pages}{391–401}.
\newblock
\showISBNx{9781450395458}
\href{https://doi.org/10.1145/3558481.3591072}{doi:\nolinkurl{10.1145/3558481.3591072}}


\bibitem[Al~Daas et~al\mbox{.}(2024)]%
        {ABGKR24}
\bibfield{author}{\bibinfo{person}{H Al~Daas}, \bibinfo{person}{G. Ballard},
  \bibinfo{person}{L. Grigori}, \bibinfo{person}{S. Kumar}, {and}
  \bibinfo{person}{K. Rouse}.} \bibinfo{year}{2024}\natexlab{}.
\newblock \showarticletitle{Communication Lower Bounds and Optimal Algorithms
  for Multiple Tensor-Times-Matrix Computation}.
\newblock \bibinfo{journal}{\emph{SIAM J. Matrix Anal. Appl.}}
  \bibinfo{volume}{45}, \bibinfo{number}{1} (\bibinfo{year}{2024}),
  \bibinfo{pages}{450--477}.
\newblock
\href{https://doi.org/10.1137/22M1510443}{doi:\nolinkurl{10.1137/22M1510443}}


\bibitem[Al~Daas et~al\mbox{.}(2025)]%
        {ABGKRV25}
\bibfield{author}{\bibinfo{person}{H. Al~Daas}, \bibinfo{person}{G. Ballard},
  \bibinfo{person}{L. Grigori}, \bibinfo{person}{S. Kumar}, \bibinfo{person}{K.
  Rouse}, {and} \bibinfo{person}{M. Verite}.} \bibinfo{year}{2025}\natexlab{}.
\newblock \showarticletitle{Communication Lower Bounds and Optimal Algorithms
  for Symmetric Matrix Computations}.
\newblock \bibinfo{journal}{\emph{ACM Trans. Parallel Comput.}}
  \bibinfo{volume}{12}, \bibinfo{number}{2}, Article \bibinfo{articleno}{6}
  (\bibinfo{date}{June} \bibinfo{year}{2025}), \bibinfo{numpages}{46}~pages.
\newblock
\showISSN{2329-4949}
\href{https://doi.org/10.1145/3727344}{doi:\nolinkurl{10.1145/3727344}}


\bibitem[Ballard et~al\mbox{.}(2014)]%
        {Ballard:3NL}
\bibfield{author}{\bibinfo{person}{G. Ballard}, \bibinfo{person}{E. Carson},
  \bibinfo{person}{J. Demmel}, \bibinfo{person}{M. Hoemmen},
  \bibinfo{person}{N. Knight}, {and} \bibinfo{person}{O. Schwartz}.}
  \bibinfo{year}{2014}\natexlab{}.
\newblock \showarticletitle{Communication Lower Bounds and Optimal Algorithms
  for Numerical Linear Algebra}.
\newblock \bibinfo{journal}{\emph{Acta Numerica}}  \bibinfo{volume}{23}
  (\bibinfo{year}{2014}).
\newblock
\href{https://doi.org/10.1017/S0962492914000038}{doi:\nolinkurl{10.1017/S0962492914000038}}


\bibitem[Ballard et~al\mbox{.}(2018)]%
        {BKR18}
\bibfield{author}{\bibinfo{person}{G. Ballard}, \bibinfo{person}{N. Knight},
  {and} \bibinfo{person}{K. Rouse}.} \bibinfo{year}{2018}\natexlab{}.
\newblock \showarticletitle{Communication Lower Bounds for Matricized Tensor
  Times Khatri-Rao Product}. In \bibinfo{booktitle}{\emph{IPDPS 2018}}.
  \bibinfo{pages}{557--567}.
\newblock
\href{https://doi.org/10.1109/IPDPS.2018.00065}{doi:\nolinkurl{10.1109/IPDPS.2018.00065}}


\bibitem[Ballard and Rouse(2020)]%
        {BR20}
\bibfield{author}{\bibinfo{person}{G. Ballard} {and} \bibinfo{person}{K.
  Rouse}.} \bibinfo{year}{2020}\natexlab{}.
\newblock \showarticletitle{General Memory-Independent Lower Bound for MTTKRP}.
  In \bibinfo{booktitle}{\emph{SIAM PP}}. \bibinfo{pages}{1--11}.
\newblock
\href{https://doi.org/10.1137/1.9781611976137.1}{doi:\nolinkurl{10.1137/1.9781611976137.1}}


\bibitem[Beaumont et~al\mbox{.}(2022)]%
        {BELV22}
\bibfield{author}{\bibinfo{person}{O. Beaumont}, \bibinfo{person}{L.
  Eyraud-Dubois}, \bibinfo{person}{J. Langou}, {and} \bibinfo{person}{M.
  V{\'e}rit{\'e}}.} \bibinfo{year}{2022}\natexlab{}.
\newblock \showarticletitle{{I/O}-Optimal Algorithms for Symmetric Linear
  Algebra Kernels}. In \bibinfo{booktitle}{\emph{SPAA 2022}}.
\newblock
\href{https://doi.org/10.1145/3490148.3538587}{doi:\nolinkurl{10.1145/3490148.3538587}}


\bibitem[Bennett et~al\mbox{.}(2010)]%
        {BCCT10}
\bibfield{author}{\bibinfo{person}{J. Bennett}, \bibinfo{person}{A. Carbery},
  \bibinfo{person}{M. Christ}, {and} \bibinfo{person}{T. Tao}.}
  \bibinfo{year}{2010}\natexlab{}.
\newblock \showarticletitle{Finite bounds for {H}{\"o}lder-{B}rascamp-{L}ieb
  multilinear inequalities}.
\newblock \bibinfo{journal}{\emph{Mathematical Research Letters}}
  \bibinfo{volume}{17}, \bibinfo{number}{4} (\bibinfo{year}{2010}),
  \bibinfo{pages}{647--666}.
\newblock
\href{https://doi.org/10.4310/MRL.2010.v17.n4.a6}{doi:\nolinkurl{10.4310/MRL.2010.v17.n4.a6}}


\bibitem[Chan et~al\mbox{.}(2007)]%
        {Chan:CollectiveCommunications:2007}
\bibfield{author}{\bibinfo{person}{E. Chan}, \bibinfo{person}{M. Heimlich},
  \bibinfo{person}{A. Purkayastha}, {and} \bibinfo{person}{R. van~de Geijn}.}
  \bibinfo{year}{2007}\natexlab{}.
\newblock \showarticletitle{Collective Communication: Theory, Practice, and
  Experience}.
\newblock \bibinfo{journal}{\emph{Conc. and Comp.: Prac. and Exper.}}
  \bibinfo{volume}{19}, \bibinfo{number}{13} (\bibinfo{year}{2007}).
\newblock
\href{https://doi.org/10.1002/cpe.1206}{doi:\nolinkurl{10.1002/cpe.1206}}


\bibitem[Colbourn and Dinitz(1996)]%
        {crcCombinatorialDesigns1996}
\bibfield{author}{\bibinfo{person}{C.~J. Colbourn} {and} \bibinfo{person}{J.~H.
  Dinitz}.} \bibinfo{year}{1996}\natexlab{}.
\newblock \bibinfo{title}{Handbook of Combinatorial Designs}.
\newblock


\bibitem[Diestel(2017)]%
        {Diestel2017}
\bibfield{author}{\bibinfo{person}{R. Diestel}.}
  \bibinfo{year}{2017}\natexlab{}.
\newblock \bibinfo{booktitle}{\emph{Graph Theory}}.
\newblock \bibinfo{publisher}{Springer Berlin Heidelberg}.
\newblock
\showISBNx{9783662536223}
\showISSN{2197-5612}
\href{https://doi.org/10.1007/978-3-662-53622-3}{doi:\nolinkurl{10.1007/978-3-662-53622-3}}


\bibitem[Ford and Fulkerson(1987)]%
        {Ford87}
\bibfield{author}{\bibinfo{person}{L.~R. Ford} {and} \bibinfo{person}{D.~R.
  Fulkerson}.} \bibinfo{year}{1987}\natexlab{}.
\newblock \bibinfo{booktitle}{\emph{Maximal Flow Through a Network}}.
\newblock \bibinfo{publisher}{Birkh{\"a}user Boston}, \bibinfo{address}{Boston,
  MA}, \bibinfo{pages}{243--248}.
\newblock
\showISBNx{978-0-8176-4842-8}
\href{https://doi.org/10.1007/978-0-8176-4842-8_15}{doi:\nolinkurl{10.1007/978-0-8176-4842-8_15}}


\bibitem[Gowers(2017)]%
        {Gowers2017}
\bibfield{author}{\bibinfo{person}{W.~T. Gowers}.}
  \bibinfo{year}{2017}\natexlab{}.
\newblock \showarticletitle{Probabilistic Combinatorics and the Recent Work of
  {Peter Keevash}.}
\newblock \bibinfo{journal}{\emph{Bulletin (New Series) of the American
  Mathematical Society}} \bibinfo{volume}{54}, \bibinfo{number}{1}
  (\bibinfo{year}{2017}).
\newblock
\href{https://doi.org/10.1090/bull/1553}{doi:\nolinkurl{10.1090/bull/1553}}


\bibitem[Hong and Kung(1981)]%
        {HK81}
\bibfield{author}{\bibinfo{person}{J.~W. Hong} {and} \bibinfo{person}{H.~T.
  Kung}.} \bibinfo{year}{1981}\natexlab{}.
\newblock \showarticletitle{{I/O} complexity: The Red-Blue Pebble Game}. In
  \bibinfo{booktitle}{\emph{STOC 1981}}.
\newblock
\href{https://doi.org/10.1145/800076.802486}{doi:\nolinkurl{10.1145/800076.802486}}


\bibitem[Hopcroft and Karp(1973)]%
        {HK73}
\bibfield{author}{\bibinfo{person}{J.~E. Hopcroft} {and} \bibinfo{person}{R.~M.
  Karp}.} \bibinfo{year}{1973}\natexlab{}.
\newblock \showarticletitle{An $n^{5/2}$ algorithm for maximum matchings in
  bipartite graphs}.
\newblock \bibinfo{journal}{\emph{SIAM Journal on computing}}
  \bibinfo{volume}{2}, \bibinfo{number}{4} (\bibinfo{year}{1973}),
  \bibinfo{pages}{225--231}.
\newblock
\href{https://doi.org/10.1137/0202019}{doi:\nolinkurl{10.1137/0202019}}


\bibitem[Irony et~al\mbox{.}(2004)]%
        {IRONY:JPDC04}
\bibfield{author}{\bibinfo{person}{D. Irony}, \bibinfo{person}{S. Toledo},
  {and} \bibinfo{person}{A. Tiskin}.} \bibinfo{year}{2004}\natexlab{}.
\newblock \showarticletitle{Communication Lower Bounds for Distributed-Memory
  Matrix Multiplication}.
\newblock \bibinfo{journal}{\emph{J. Parallel and Distrib. Comput.}}
  \bibinfo{volume}{64}, \bibinfo{number}{9} (\bibinfo{year}{2004}),
  \bibinfo{pages}{1017--1026}.
\newblock
\showISSN{0743-7315}
\href{https://doi.org/10.1016/j.jpdc.2004.03.021}{doi:\nolinkurl{10.1016/j.jpdc.2004.03.021}}


\bibitem[Keevash(2014)]%
        {Keevash2014}
\bibfield{author}{\bibinfo{person}{P. Keevash}.}
  \bibinfo{year}{2014}\natexlab{}.
\newblock \showarticletitle{The existence of designs}.
\newblock \bibinfo{journal}{\emph{arXiv preprint arXiv:1401.3665}}
  (\bibinfo{year}{2014}).
\newblock
\urldef\tempurl%
\url{https://arxiv.org/abs/1401.3665}
\showURL{%
\tempurl}


\bibitem[Kolda(2015)]%
        {Kolda15}
\bibfield{author}{\bibinfo{person}{T.~G. Kolda}.}
  \bibinfo{year}{2015}\natexlab{}.
\newblock \showarticletitle{Numerical optimization for symmetric tensor
  decomposition}.
\newblock \bibinfo{journal}{\emph{Mathematical Programming}}
  \bibinfo{volume}{151}, \bibinfo{number}{1} (\bibinfo{year}{2015}),
  \bibinfo{pages}{225--248}.
\newblock
\showISSN{0025-5610}
\href{https://doi.org/10.1007/s10107-015-0895-0}{doi:\nolinkurl{10.1007/s10107-015-0895-0}}


\bibitem[Kwasniewski et~al\mbox{.}(2021)]%
        {KB+21}
\bibfield{author}{\bibinfo{person}{G. Kwasniewski}, \bibinfo{person}{T.
  Ben-Nun}, \bibinfo{person}{L. Gianinazzi}, \bibinfo{person}{A. Calotoiu},
  \bibinfo{person}{T. Schneider}, \bibinfo{person}{A.~N. Ziogas},
  \bibinfo{person}{M. Besta}, {and} \bibinfo{person}{T. Hoefler}.}
  \bibinfo{year}{2021}\natexlab{}.
\newblock \showarticletitle{Pebbles, Graphs, and a Pinch of Combinatorics:
  Towards Tight I/O Lower Bounds for Statically Analyzable Programs}. In
  \bibinfo{booktitle}{\emph{Proceedings of the 33rd ACM Symposium on
  Parallelism in Algorithms and Architectures}} (Virtual Event, USA)
  \emph{(\bibinfo{series}{SPAA '21})}. \bibinfo{publisher}{Association for
  Computing Machinery}, \bibinfo{address}{New York, NY, USA},
  \bibinfo{pages}{328–339}.
\newblock
\showISBNx{9781450380706}
\href{https://doi.org/10.1145/3409964.3461796}{doi:\nolinkurl{10.1145/3409964.3461796}}


\bibitem[Lim(2005)]%
        {Lim2005}
\bibfield{author}{\bibinfo{person}{L.-H. Lim}.}
  \bibinfo{year}{2005}\natexlab{}.
\newblock \showarticletitle{Singular values and eigenvalues of tensors: a
  variational approach}. In \bibinfo{booktitle}{\emph{1st IEEE International
  Workshop on Computational Advances in Multi-Sensor Adaptive Processing,
  2005.}} \bibinfo{pages}{129--132}.
\newblock
\href{https://doi.org/10.1109/CAMAP.2005.1574201}{doi:\nolinkurl{10.1109/CAMAP.2005.1574201}}


\bibitem[Loomis and Whitney(1949)]%
        {LW49}
\bibfield{author}{\bibinfo{person}{L.~H. Loomis} {and} \bibinfo{person}{H.
  Whitney}.} \bibinfo{year}{1949}\natexlab{}.
\newblock \showarticletitle{An Inequality Related to the Isoperimetric
  Inequality}.
\newblock \bibinfo{journal}{\emph{Bull. Amer. Math. Soc.}}
  \bibinfo{volume}{55}, \bibinfo{number}{10} (\bibinfo{year}{1949}).
\newblock
\href{https://doi.org/10.1090/S0002-9904-1949-09320-5}{doi:\nolinkurl{10.1090/S0002-9904-1949-09320-5}}


\bibitem[Olivry et~al\mbox{.}(2020)]%
        {OLPSR20}
\bibfield{author}{\bibinfo{person}{A. Olivry}, \bibinfo{person}{J. Langou},
  \bibinfo{person}{L.-N. Pouchet}, \bibinfo{person}{P. Sadayappan}, {and}
  \bibinfo{person}{F. Rastello}.} \bibinfo{year}{2020}\natexlab{}.
\newblock \showarticletitle{Automated Derivation of Parametric Data Movement
  Lower Bounds for Affine Programs}. In \bibinfo{booktitle}{\emph{PLDI 2020}}.
\newblock
\href{https://doi.org/10.1145/3385412.3385989}{doi:\nolinkurl{10.1145/3385412.3385989}}


\bibitem[Qi(2005)]%
        {Qi2005}
\bibfield{author}{\bibinfo{person}{L. Qi}.} \bibinfo{year}{2005}\natexlab{}.
\newblock \showarticletitle{Eigenvalues of a real supersymmetric tensor}.
\newblock \bibinfo{journal}{\emph{Journal of Symbolic Computation}}
  \bibinfo{volume}{40}, \bibinfo{number}{6} (\bibinfo{year}{2005}),
  \bibinfo{pages}{1302--1324}.
\newblock
\showISSN{0747-7171}
\href{https://doi.org/10.1016/j.jsc.2005.05.007}{doi:\nolinkurl{10.1016/j.jsc.2005.05.007}}


\bibitem[Shivakumar et~al\mbox{.}(2023)]%
        {SA+23}
\bibfield{author}{\bibinfo{person}{S. Shivakumar}, \bibinfo{person}{I. Amburg},
  \bibinfo{person}{S.~G. Aksoy}, \bibinfo{person}{J. Li},
  \bibinfo{person}{S.~J. Young}, {and} \bibinfo{person}{S. Aluru}.}
  \bibinfo{year}{2023}\natexlab{}.
\newblock \showarticletitle{{ Fast Parallel Tensor Times Same Vector for
  Hypergraphs }}. In \bibinfo{booktitle}{\emph{2023 IEEE 30th International
  Conference on High Performance Computing, Data, and Analytics (HiPC)}}.
  \bibinfo{publisher}{IEEE Computer Society}, \bibinfo{address}{Los Alamitos,
  CA, USA}, \bibinfo{pages}{324--334}.
\newblock
\href{https://doi.org/10.1109/HiPC58850.2023.00049}{doi:\nolinkurl{10.1109/HiPC58850.2023.00049}}


\bibitem[Solomonik et~al\mbox{.}(2021)]%
        {Solomonik:SISC:2021}
\bibfield{author}{\bibinfo{person}{E. Solomonik}, \bibinfo{person}{J. Demmel},
  {and} \bibinfo{person}{T. Hoefler}.} \bibinfo{year}{2021}\natexlab{}.
\newblock \showarticletitle{Communication Lower Bounds of Bilinear Algorithms
  for Symmetric Tensor Contractions}.
\newblock \bibinfo{journal}{\emph{SIAM Journal on Scientific Computing}}
  \bibinfo{volume}{43}, \bibinfo{number}{5} (\bibinfo{year}{2021}),
  \bibinfo{pages}{A3328--A3356}.
\newblock
\href{https://doi.org/10.1137/20M1338599}{doi:\nolinkurl{10.1137/20M1338599}}


\bibitem[Thakur et~al\mbox{.}(2005)]%
        {Thakur:CollectiveCommunications:2005}
\bibfield{author}{\bibinfo{person}{R. Thakur}, \bibinfo{person}{R.
  Rabenseifner}, {and} \bibinfo{person}{W. Gropp}.}
  \bibinfo{year}{2005}\natexlab{}.
\newblock \showarticletitle{Optimization of Collective Communication Operations
  in MPICH}.
\newblock \bibinfo{journal}{\emph{Intl. J. High Perf. Comp. App.}}
  \bibinfo{volume}{19}, \bibinfo{number}{1} (\bibinfo{year}{2005}).
\newblock
\href{https://doi.org/10.1177/1094342005051521}{doi:\nolinkurl{10.1177/1094342005051521}}


\bibitem[Wilson(1975)]%
        {Wilson1975}
\bibfield{author}{\bibinfo{person}{Richard~M Wilson}.}
  \bibinfo{year}{1975}\natexlab{}.
\newblock \showarticletitle{An existence theory for pairwise balanced designs,
  III: Proof of the existence conjectures}.
\newblock \bibinfo{journal}{\emph{Journal of Combinatorial Theory, Series A}}
  \bibinfo{volume}{18}, \bibinfo{number}{1} (\bibinfo{year}{1975}),
  \bibinfo{pages}{71--79}.
\newblock
\href{https://doi.org/10.1016/0097-3165(75)90067-9}{doi:\nolinkurl{10.1016/0097-3165(75)90067-9}}


\bibitem[Ziogas et~al\mbox{.}(2022)]%
        {ZKBSH22}
\bibfield{author}{\bibinfo{person}{A.~N. Ziogas}, \bibinfo{person}{G.
  Kwasniewski}, \bibinfo{person}{T. Ben-Nun}, \bibinfo{person}{T. Schneider},
  {and} \bibinfo{person}{T. Hoefler}.} \bibinfo{year}{2022}\natexlab{}.
\newblock \showarticletitle{Deinsum: practically {I/O} optimal multi-linear
  algebra}. In \bibinfo{booktitle}{\emph{Proceedings of the International
  Conference on High Performance Computing, Networking, Storage and Analysis}}.
\newblock
\href{https://doi.org/10.5555/3571885.3571918}{doi:\nolinkurl{10.5555/3571885.3571918}}


\end{thebibliography}

\appendix
\newpage
\section{Another Example of Tetrahedral Block Partition}
\label{sec:anotherExample}

While our main text focused on Steiner $(n,r,s)$ systems in the infinite family from finite spherical geometries, there are many more Steiner $(n,r,3)$ systems which can be used to generate tetrahedral block partitions.
We will use a smaller system to demonstrate a sequence of steps for the required point-to-point communication.
\Cref{app:tab:TBP-PRS-8-14} shows the assignment of all lower tetrahedron blocks of a symmetric tensor from a Steiner $(8,4,3)$ system. 
This system does not belong to the family of Steiner $(q^\alpha +1, q+1, 3)$ systems, but exists in the literature.

\begin{table}[ht]
	\begin{tabular}{|c|c|c|c||c|c|}
		\hline
		$p$ & $R_p$ & $N_p$ & $D_p$ & $i$ & $Q_i$\\
		\hline
		1 & \{1,2,3,4\} & \{(2,2,1), (3,3,2), (2,1,1), (3,2,2)\} & \{(1,1,1)\} & 1 & \{1,2,3,4,5,6,7\}\\
		2 & \{1,2,5,6\} & \{(5,5,1), (6,6,1), (5,1,1), (5,2,2)\} & \{(2,2,2)\} & 2 & \{1,2,3,8,9,10,11\}\\
		3 & \{1,2,7,8\} & \{(7,7,1), (8,8,1), (7,1,1), (7,2,2)\} & \{(7,7,7)\} & 3 & \{1,4,5,8,9,12,13\}\\
		4 & \{1,3,5,7\} & \{(7,7,3), (7,7,5), (3,1,1), (7,3,3)\} & \{(3,3,3)\} & 4 & \{1,6,7,10,11,12,13\}\\
		5 & \{1,3,6,8\} & \{(6,6,3), (3,3,1), (6,1,1), (8,1,1)\} & \{(6,6,6)\} & 5 & \{2,4,6,8,10,12,14\}\\
		6 & \{1,4,5,8\} & \{(8,8,4), (5,5,4), (4,1,1), (5,4,4)\} & \{(5,5,5)\} & 6 & \{2,5,7,9,11,12,14\}\\
		7 & \{1,4,6,7\} & \{(7,7,4), (4,4,1), (6,4,4), (7,6,6)\} & \{(4,4,4)\} & 7 & \{3,4,7,9,10,13,14\}\\
		8 & \{2,3,5,8\} & \{(8,8,5), (5,5,3), (5,3,3), (8,2,2)\} & \{(8,8,8)\} & 8 & \{3,5,6,8,11,13,14\}\\
		9 & \{2,3,6,7\} & \{(6,6,2), (7,7,2), (6,2,2), (6,3,3)\} & \{\} & & \\
		10 & \{2,4,5,7\} & \{(5,5,2), (4,4,2), (4,2,2), (7,4,4)\} & \{\} & & \\
		11 & \{2,4,6,8\} & \{(8,8,2), (8,8,6), (8,4,4), (8,6,6)\} & \{\} & & \\
		12 & \{3,4,5,6\} & \{(6,6,4), (4,4,3), (4,3,3), (6,5,5)\} & \{\} & & \\
		13 & \{3,4,7,8\} & \{(8,8,3), (8,8,7), (8,3,3), (8,7,7)\} & \{\} & & \\
		14 & \{5,6,7,8\} & \{(6,6,5), (7,7,6), (7,5,5), (8,5,5)\} & \{\} & & \\
		\hline
	\end{tabular}
	\caption{Processor and row block sets of tetrahedral block partition for $m=8$ and $P=14$. Processor $p$ has all $TB_3(R_p)\cup N_p\cup D_p$ blocks of the symmetric tensor and performs all required computations with them. Row block $i$ of a vector is evenly distributed  among all processors of $Q_i$.} \label{app:tab:TBP-PRS-8-14}
\end{table}

\Cref{fig:communicationsteps} shows a sequence of steps to perform all required communication for the assignment of \Cref{app:tab:TBP-PRS-8-14} with $m=8$ and $P=14$. In each step, every processor directly communicates with another processor. Note that the number of steps is $12$, which is less than $P-1$.

\begin{figure}[htb]
	\begin{center}
		\subfloat[{\scriptsize Step 1.}]{
			\def\seq{1/1,2/13,3/14,4/12,5/11,6/10,7/8,8/9,9/7,10/6,11/5,12/4,13/3,14/2}
			\centering

\pgfmathsetmacro{\offset}{.33}

\begin{tikzpicture}[scale=0.45, every node/.style={transform shape}]
	\foreach \x/\y in \seq {
		\node[circle, draw, fill=blue!15, minimum size=5mm] (N\x) at (\x-0.5, 0) {\y};
	}
	\foreach \i in {1,...,13} {
		\pgfmathtruncatemacro{\next}{\i+1}
		\draw[->] (N\i) -- (N\next);
	}
	\draw[->] (N14) to [out=150,in=35] (N1);	
\end{tikzpicture}}$\qquad$		
		\subfloat[{\scriptsize Step 2.}]{
			\def\seq{1/1,2/13,3/14,4/12,5/11,6/10,7/8,8/9,9/7,10/6,11/5,12/4,13/3,14/2}
			\centering

\pgfmathsetmacro{\offset}{.33}

\begin{tikzpicture}[scale=0.45, every node/.style={transform shape}]
	\foreach \x/\y in \seq {
		\node[circle, draw, fill=blue!15, minimum size=5mm] (N\x) at (\x-0.5, 0) {\y};
	}
	\foreach \i in {1,...,13} {
		\pgfmathtruncatemacro{\next}{\i+1}
		\draw[<-] (N\i) -- (N\next);
	}
	\draw[<-] (N14) to [out=150,in=35] (N1);	
\end{tikzpicture}}
		
		\subfloat[{\scriptsize Step 3.}]{
			\def\seq{1/1,2/12,3/13,4/9,5/14,6/10,7/7,8/11,9/8,10/6,11/4,12/2,13/5,14/3}
			\centering

\pgfmathsetmacro{\offset}{.33}

\begin{tikzpicture}[scale=0.45, every node/.style={transform shape}]
	\foreach \x/\y in \seq {
		\node[circle, draw, fill=blue!15, minimum size=5mm] (N\x) at (\x-0.5, 0) {\y};
	}
	\foreach \i in {1,...,13} {
		\pgfmathtruncatemacro{\next}{\i+1}
		\draw[->] (N\i) -- (N\next);
	}
	\draw[->] (N14) to [out=150,in=35] (N1);	
\end{tikzpicture}}$\qquad$
		\subfloat[{\scriptsize Step 4.}]{
			\def\seq{1/1,2/12,3/13,4/9,5/14,6/10,7/7,8/11,9/8,10/6,11/4,12/2,13/5,14/3}
			\centering

\pgfmathsetmacro{\offset}{.33}

\begin{tikzpicture}[scale=0.45, every node/.style={transform shape}]
	\foreach \x/\y in \seq {
		\node[circle, draw, fill=blue!15, minimum size=5mm] (N\x) at (\x-0.5, 0) {\y};
	}
	\foreach \i in {1,...,13} {
		\pgfmathtruncatemacro{\next}{\i+1}
		\draw[<-] (N\i) -- (N\next);
	}
	\draw[<-] (N14) to [out=150,in=35] (N1);	
\end{tikzpicture}}
		
		\subfloat[{\scriptsize Step 5.}]{
			\def\seq{1/1,2/8,3/13,4/6,5/10,6/9,7/5,8/12,9/4,10/7,11/3,12/14,13/2,14/11}
			\centering

\pgfmathsetmacro{\offset}{.33}

\begin{tikzpicture}[scale=0.45, every node/.style={transform shape}]
	\foreach \x/\y in \seq {
		\node[circle, draw, fill=blue!15, minimum size=5mm] (N\x) at (\x-0.5, 0) {\y};
	}
	\foreach \i in {1,...,13} {
		\pgfmathtruncatemacro{\next}{\i+1}
		\draw[->] (N\i) -- (N\next);
	}
	\draw[->] (N14) to [out=150,in=35] (N1);	
\end{tikzpicture}}$\qquad$
		\subfloat[{\scriptsize Step 6.}]{
			\def\seq{1/1,2/8,3/13,4/6,5/10,6/9,7/5,8/12,9/4,10/7,11/3,12/14,13/2,14/11}
			\centering

\pgfmathsetmacro{\offset}{.33}

\begin{tikzpicture}[scale=0.45, every node/.style={transform shape}]
	\foreach \x/\y in \seq {
		\node[circle, draw, fill=blue!15, minimum size=5mm] (N\x) at (\x-0.5, 0) {\y};
	}
	\foreach \i in {1,...,13} {
		\pgfmathtruncatemacro{\next}{\i+1}
		\draw[<-] (N\i) -- (N\next);
	}
	\draw[<-] (N14) to [out=150,in=35] (N1);	
\end{tikzpicture}}
		
		\subfloat[{\scriptsize Step 7.}]{
			\def\seq{1/1,2/4,3/8,4/3,5/6,6/14,7/7,8/5,9/13,10/11,11/9,12/2,13/12,14/10}
			\centering

\pgfmathsetmacro{\offset}{.33}

\begin{tikzpicture}[scale=0.45, every node/.style={transform shape}]
	\foreach \x/\y in \seq {
		\node[circle, draw, fill=blue!15, minimum size=5mm] (N\x) at (\x-0.5, 0) {\y};
	}
	\foreach \i in {1,...,13} {
		\pgfmathtruncatemacro{\next}{\i+1}
		\draw[->] (N\i) -- (N\next);
	}
	\draw[->] (N14) to [out=150,in=35] (N1);	
\end{tikzpicture}}$\qquad$
		\subfloat[{\scriptsize Step 8.}]{
			\def\seq{1/1,2/4,3/8,4/3,5/6,6/14,7/7,8/5,9/13,10/11,11/9,12/2,13/12,14/10}
			\centering

\pgfmathsetmacro{\offset}{.33}

\begin{tikzpicture}[scale=0.45, every node/.style={transform shape}]
	\foreach \x/\y in \seq {
		\node[circle, draw, fill=blue!15, minimum size=5mm] (N\x) at (\x-0.5, 0) {\y};
	}
	\foreach \i in {1,...,13} {
		\pgfmathtruncatemacro{\next}{\i+1}
		\draw[<-] (N\i) -- (N\next);
	}
	\draw[<-] (N14) to [out=150,in=35] (N1);	
\end{tikzpicture}}			
		
		\subfloat[{\scriptsize Step 9.}]{
			\def\seq{1/1,2/7,3/12,4/9,5/4,6/14,7/5,8/8,9/2,10/10,11/13,12/3,13/11,14/6}
			\centering

\pgfmathsetmacro{\offset}{.33}

\begin{tikzpicture}[scale=0.45, every node/.style={transform shape}]
	\foreach \x/\y in \seq {
		\node[circle, draw, fill=blue!15, minimum size=5mm] (N\x) at (\x-0.5, 0) {\y};
	}
	\foreach \i in {1,...,13} {
		\pgfmathtruncatemacro{\next}{\i+1}
		\draw[->] (N\i) -- (N\next);
	}
	\draw[->] (N14) to [out=150,in=35] (N1);	
\end{tikzpicture}}$\qquad$
		\subfloat[{\scriptsize Step 10.}]{
			\def\seq{1/1,2/7,3/12,4/9,5/4,6/14,7/5,8/8,9/2,10/10,11/13,12/3,13/11,14/6}
			\centering

\pgfmathsetmacro{\offset}{.33}

\begin{tikzpicture}[scale=0.45, every node/.style={transform shape}]
	\foreach \x/\y in \seq {
		\node[circle, draw, fill=blue!15, minimum size=5mm] (N\x) at (\x-0.5, 0) {\y};
	}
	\foreach \i in {1,...,13} {
		\pgfmathtruncatemacro{\next}{\i+1}
		\draw[<-] (N\i) -- (N\next);
	}
	\draw[<-] (N14) to [out=150,in=35] (N1);	
\end{tikzpicture}}
		
		\subfloat[{\scriptsize Step 11.}]{
			\def\seq{1/1,2/9,3/3,4/10,5/4,6/13,7/7,8/2,9/6,10/12,11/8,12/14,13/11,14/5}
			\centering

\pgfmathsetmacro{\offset}{.33}

\begin{tikzpicture}[scale=0.45, every node/.style={transform shape}]
	\foreach \x/\y in \seq {
		\node[circle, draw, fill=blue!15, minimum size=5mm] (N\x) at (\x-0.5, 0) {\y};
	}
	\foreach \i in {1,...,13} {
		\pgfmathtruncatemacro{\next}{\i+1}
		\draw[->] (N\i) -- (N\next);
	}
	\draw[->] (N14) to [out=150,in=35] (N1);	
\end{tikzpicture}}$\qquad$
		\subfloat[{\scriptsize Step 12.}]{
			\def\seq{1/1,2/9,3/3,4/10,5/4,6/13,7/7,8/2,9/6,10/12,11/8,12/14,13/11,14/5}
			\centering

\pgfmathsetmacro{\offset}{.33}

\begin{tikzpicture}[scale=0.45, every node/.style={transform shape}]
	\foreach \x/\y in \seq {
		\node[circle, draw, fill=blue!15, minimum size=5mm] (N\x) at (\x-0.5, 0) {\y};
	}
	\foreach \i in {1,...,13} {
		\pgfmathtruncatemacro{\next}{\i+1}
		\draw[<-] (N\i) -- (N\next);
	}
	\draw[<-] (N14) to [out=150,in=35] (N1);	
\end{tikzpicture}}
	\end{center}
	\caption{A sequence of $12$ communication steps to perform all required data transfers among all $14$ processors for the tetrahedral block partition of \Cref{app:tab:TBP-PRS-8-14}. $i\rightarrow j$ indicates that processor $i$ sends a message to processor $j$. In each step, a processor sends and receives exactly one message.\label{fig:communicationsteps}}
\end{figure}
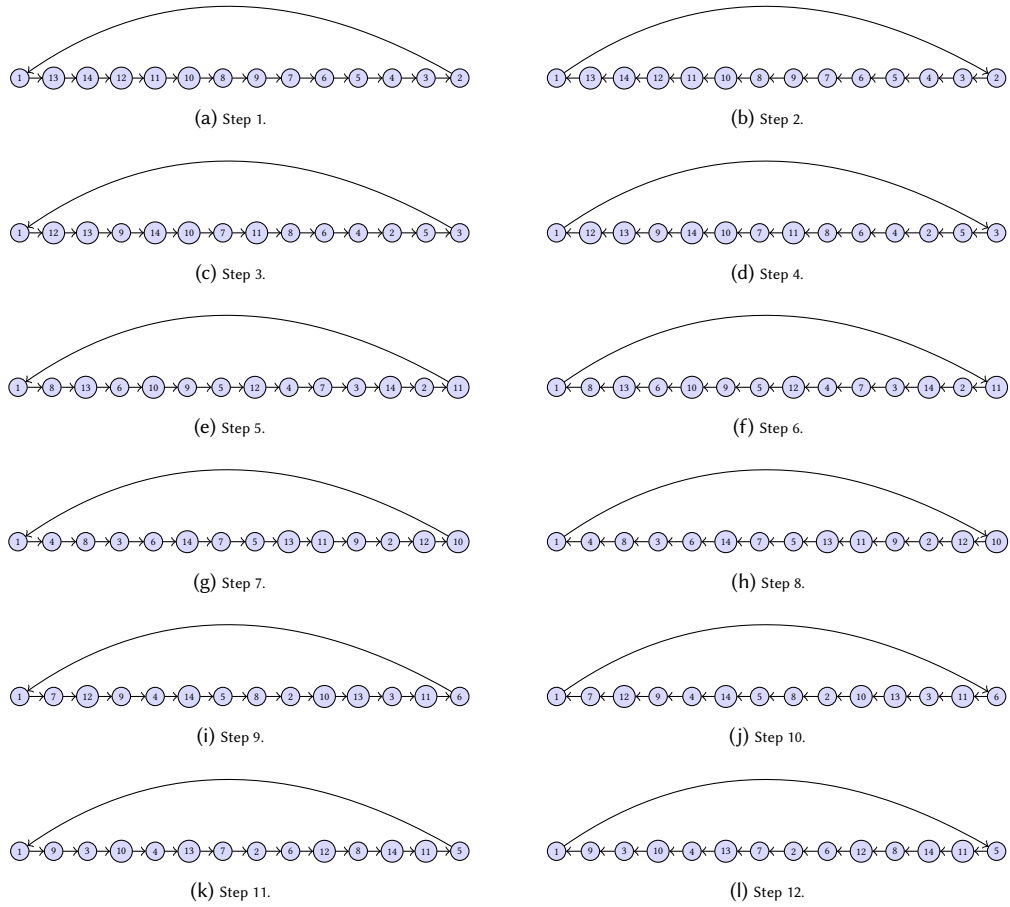

\end{document}